\newtheorem{lemma}{Lemma}
\newtheorem{theorem}{Theorem}
\newenvironment{problemformulation}[1][htb]%
  {\renewcommand{\ALG@name}{Formulation}
   \begin{algorithm}[#1]%
  }{\end{algorithm}}
\begin{document}

\title{Delay and reliability-constrained VNF placement on mobile and volatile 5G infrastructure}


\author{Bal\'azs N\'emeth, Nuria Molner, Jorge Mart\'in-P\'erez,\\ Carlos J. Bernardos, Antonio de la Oliva, and Bal\'azs Sonkoly%
\IEEEcompsocitemizethanks{
\IEEEcompsocthanksitem Bal\'azs N\'emeth and Bal\'azs Sonkoly are with MTA-BME Network Softwarization Research Group, Hungary%
\IEEEcompsocthanksitem Emails: balazs.nemeth@tmit.bme.hu, sonkoly@tmit.bme.hu
\IEEEcompsocthanksitem Nuria Molner is with IMDEA Networks Institute, Spain
\IEEEcompsocthanksitem Email: nuria.molner@imdea.org
\IEEEcompsocthanksitem Nuria Molner, Jorge Mart\'in-P\'erez, Carlos J. Bernardos, and Antonio de la Oliva  are with Universidad Carlos III de Madrid, Spain
\IEEEcompsocthanksitem Emails: ----, jmartinp@it.uc3m.es, cjbc@it.uc3m.es, aoliva@it.uc3m.es}%
}

\maketitle

\begin{abstract}
    The ongoing research and industrial exploitation of SDN and NFV technologies promise higher flexibility on network automation and infrastructure optimization.  Choosing the location of Virtual Network Functions is a central problem in the automation and optimization of the software-defined, virtualization-based next generation of networks such as 5G and beyond.  Network services provided for autonomous vehicles, factory automation, e-health and cloud robotics often require strict delay bounds and reliability constraints, which are influenced by the location of its composing Virtual Network Functions. Robots, vehicles and other end-devices provide significant capabilities such as actuators, sensors and local computation which are essential for some services. Moreover, these devices are continuously on the move and might lose network connection or run out of battery power, which further challenge service delivery in this dynamic environment.
    This work tackles the mobility, and battery
    restrictions; as well as the temporal aspects and conflicting traits of reliable, low latency service deployment over a volatile network, where mobile compute nodes act as an extension of the cloud and edge computing infrastructure. The problem is formulated as a cost-minimizing Virtual Network Function placement optimization and an efficient heuristic is proposed. The algorithms are extensively evaluated from various aspects by simulation on detailed real-world scenarios.
\end{abstract}
\begin{IEEEkeywords}
5G, URLLC, robots, cloud, edge, VNF placement, optimization
\end{IEEEkeywords}

\section{Introduction}
\label{sec:introduction}

\begin{figure*}[t!]
    \begin{tikzpicture}
         \node [anchor=south west] (label) at (0,0) {\includegraphics[width=\textwidth]{./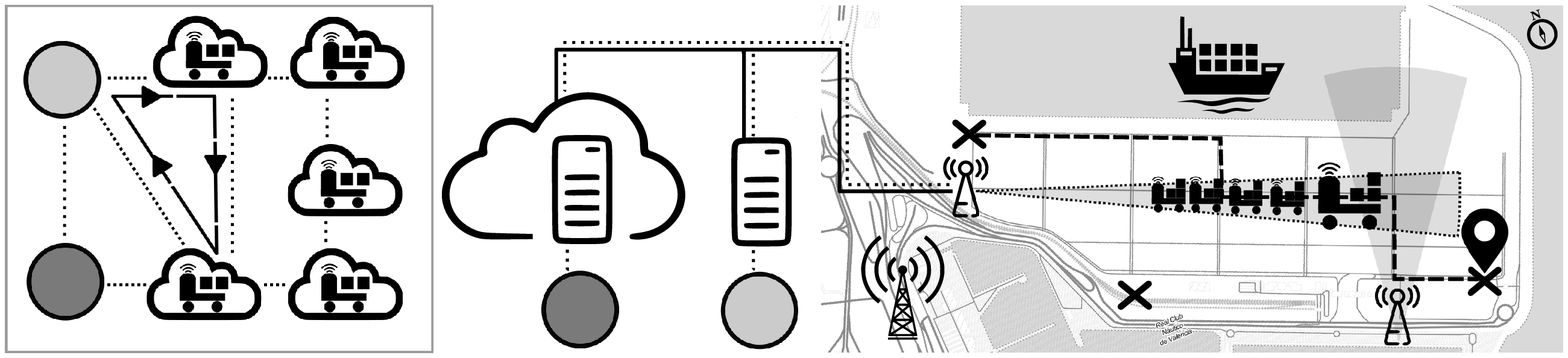}};
         \node[anchor=west] at (1.8,-0.2) {NS graph};
         \node[anchor=west] at (1.75,2.6) {\small SFC};
         \node[align=center,anchor=west] at (7.7,-0.3) {Remote \\ control VNF};
         \node[align=center,anchor=west,rotate=90] at (8.27,1.12) {\footnotesize Edge server};
         \node[align=center,anchor=west] at (6.2,-0.3) {DB \\ VNF};
         \node[align=center,anchor=west,rotate=90] at (6.22,0.97) {\footnotesize Cloud server};

         \node[fill=white,rounded corners,align=center,anchor=west,thin,draw=black] at (9.9,0) {LTE};
         \node[fill=white,rounded corners,align=center,anchor=west,thin,draw=black] at (15.65,-0.1) {NR};

         \node[fill=white,align=center,anchor=west,thin,draw=black] at (12.7,0.3) {D2};
         \node[fill=white,align=center,anchor=west,thin,draw=black] at (16.7,0.5) {D1};
         \node[fill=white,align=center,anchor=west,thin,draw=black] at (10.9,3.1) {S};
     \end{tikzpicture}
     \caption{Deployment of a cloud robotics warehousing NS in Valencia city haven.}
    \label{fig:use-case-illustration}
\end{figure*}






5G (\& beyond) systems have been promising several appealing sometimes
unbelievable future use-cases and applications which could reshape our
society.  The vast number of IoT devices, autonomous vehicles and
different types of robots collaborating with each other and with
humans are expected to be part of our lives.  These devices usually
require coordination or fine granular, dynamically programmable
control from a reliable and permanently available platform.
Coordination of collaborating robots, drone swarms, self-driving cars
or any types of unmanned vehicles are good examples with several
fields of application from industry to agriculture and from logistics
to emergency management.  The envisioned use-cases typically pose
serious challenges on the underlying networks and cloud platforms in
terms of latency and reliability.  For example, 3GPP specified a
dedicated set of features for mission critical applications referred
to as Ultra-Reliable Low-Latency Communication (URLLC)
\cite{3gpp-urllc}.

Edge and fog computing, multi-access edge
computing (MEC) are key enablers of these applications.  The main
concept is to extend traditional cloud computing by deploying compute
resources closer to customers and end devices. By these means, both end devices and central cloud servers can offload computational tasks to resources at the edge or the fog
resulting in lower delays and in reduced network load.  In order to
meet the strict delay and reliability requirements of mission critical
applications, a distributed and heterogeneous infrastructure and the
encompassed compute and network resources should be managed carefully.
The underlying infrastructure includes both public and private
cloud/edge resources~\cite{whitepaperPrivate} providing execution environments for Virtual
Network Functions (VNFs) interconnected by public 5G networks and
privately operated domains.  In this environment, resource
orchestration is a challenging task which aims at always finding the
proper placement of software components realizing the service.
Moreover, robots or different vehicles equipped with sensors,
actuators and local computation environments, provide capabilities
which can or must be consumed by certain applications.  More exactly,
now we can run VNFs on these continuously moving mobile devices, and
the uninterrupted communication to other service components should
also be guaranteed. Beside mobility, the limited battery capacity and
the VNFs power consumption are novel aspects to be considered
in the placement decision.

This paper addresses the temporal aspects and conflicting
traits of reliable, low latency service deployment over a
network where mobile compute nodes (e.g., robots, drones) act as an
extension of the cloud and edge computing infrastructure.

The research contribution is threefold.
First, the VNF placement problem is formulated as a cost-minimizing
optimization problem. The present work
extends formulations in the state of the art imposing
the radio coverage of mobile fog devices, and
preventing that VNF deployments use fog devices
that may run out of battery.
Second, the optimization problem is solved by a novel
heuristic algorithm that, to the best of the authors'
knowledge, is the first one getting close to optimal
results while tackling both radio coverage, and
battery restrictions of fog environments.
Finally, the proposed algorithms are evaluated
via extensive simulations on a real-world scenario. 
The results confirm the beneficial
properties of our solutions in terms of scalability, cost, and runtime.

The rest of the paper is organized as follows.
Sec.~\ref{sec:use-case} introduces a future use-case motivating our work.  Sec.~\ref{sec:problem-formulation} is
devoted to the detailed description of our model and the optimization
problem.  Sec.~\ref{sec:heuristics} describes the proposed heuristic
algorithm.  Sec.~\ref{sec:results} presents the algorithms' evaluation
from different aspects based on extensive simulations.  In
Sec.~\ref{sec:relatedwork}, a summary on the related work is given
while Sec.~\ref{sec:conclusion} draws the conclusions.


\section{Use Case: mobile robotics}
\label{sec:use-case}

This work tackles the mobile robotics use case~\cite{3gppMobileRobotics} as a warehousing solution for future factories. 
In particular, it deals with the transport of goods from boats to specific locations of Valencia city haven.

The use case considers a cluster of robots, that move in a \emph{master-slave} fashion to deliver goods arriving to the haven.
Each of the robots carries containers from a pick up point (S in Figure~\ref{fig:use-case-illustration}) to a drop off point (D1 and D2).
In particular, the \emph{master} robot is followed by the other \emph{slave} robots (represented in smaller size in Figure~\ref{fig:use-case-illustration}) of the cluster along its way towards the drop off point.
Robots can communicate among themselves to report position status, or other context information useful for the \emph{master-slave} coordination.
Thus, robots have device-to-device communication between them, and computational capabilities so they can execute lightweight VNFs~\cite{lightweight} (represented as robots inside a cloud in Figure~\ref{fig:use-case-illustration}) to perform tasks such as periodic communication of context information.

To allow remote driving, the cluster of robots communicate with Radio Units (RUs). RUs proximity is needed as Mobile robotics demand communications with cycle times between 1ms and 100ms (for machine control, and video operated remote control cases) \cite{3gppMobileRobotics}.

While moving, robots may run out of battery, or leave the coverage area of the RUs that they use to communicate with the servers holding remote driving VNFs (light dark ball in the Network Service (NS) graph of Figure~\ref{fig:use-case-illustration}).
So it is important to take into account (i) that a robot is not selected for goods delivery if it may run out of battery; and (ii) the mobility between RUs.

To increase the RUs coverage and improve the end-to-end (e2e) delay, the use case presented in this section considers that the haven 
is covered by LTE RUs managed by a network operator, and New Radio (NR) RUs belonging to its Non Public Network (NPN). This is called a NPN deployment in a public network ~\cite{whitepaperPrivate}.
That is, Valencia city haven only owns the NR RUs, and its management (subscription, gateways, control plane) is done by the public network, i.e., a network operator.

For the public network infrastructure, a 5G transport network is assumed based on~\cite{ituArch} and~\cite{luca-5g-qos}.
All the RUs present in the use case transmit their traffic up to an access ring composed of several switches connected in a ring fashion.
The traffic of the access rings is latter gathered by the aggregation rings which forward traffic up to the core of the public infrastructure.
The presented use case, assumes that cloud servers are in the core of the public network, edge servers are co-located next to the access ring and the aggregation ring switches.
Regarding computational resources (i.e., CPU, memory and disk), edge servers in access rings are less powerful than edge servers in aggregation rings, and cloud servers are more powerful than edge servers.

\section{Problem formulation}
\label{sec:problem-formulation}
This section presents the formulation of the use case to tackle the VNF allocation as an optimization problem.
The problem is solved using an integer program solver to gain optimality and scalability insights.

\subsection{System model}
\label{sec:model}
%
\begin{table*}[htb]
\caption{Definition of variables and parameters}
\label{tab:notation}
\centering
\begin{adjustbox}{width=0.7\textwidth}
\begin{tabular}{c|c||c}
\multicolumn{2}{c||}{\textbf{Notation}} & \textbf{Definition} \\ 
\hline
\hline
\multicolumn{3}{c}{\textbf{Parameters}} \\ 
\hline
\hline
\multicolumn{2}{c||}{$G_I$} & Network infrastructure graph \\
\hline
\multirow{4}{*}{$G_I$} & $V(G_I)$ &   All infrastructure nodes  \\
\cline{2-3}
& \multirow{2}{*}{$V_{*}(G_I)$} & Nodes of type $*$ in the infrastructure \\
& & $* \in \{AP, \ S, \ M, \ H, \ RC_q \}$ \\
\cline{2-3}
& $E(G_I)$ & Edges of the infrastructure \\
\hline
\multicolumn{2}{c||}{$G_S$} & Network service graph \\
\hline
\multirow{8}{*}{$G_S$} & $V(G_S)$ & All VNFs of the network service \\
\cline{2-3}
& $\mathcal{P}(G_S)$ & All paths of the service graph \\
\cline{2-3}
& $G_s$ & Graph of SFC $G_s$ \\
\cline{2-3}
& $V(G_s)$ & VNFs of SFC $G_s$ \\
\cline{2-3}
& $E(G_s)$ & Edges of SFC path $G_s$  \\
\cline{2-3}
& $\Delta_{G_s}$ & Delay requirement of SFC $G_s$ \\
\cline{2-3}
& $th_{bat}^{s}$ & Battery threshold for SFC $G_s$ \\
\cline{2-3}
& $\mathcal{C}_{SFC}$ & Set of all SFCs \\
\hline
\multicolumn{2}{c||}{VNF $v$} & Virtual Network Function $v$ \\
\hline
\multirow{2}{*}{VNF $v$} & $C_{v}$ & Capacity demand of VNF $v$ \\
\cline{2-3}
& $L$ & Locality matrix $V(G_S) \times V(G_I)$ \\
\hline
\multirow{6}{*}{$N_i$} & $\overline{C}_{N_i}$ & Total resource capacity of node $N_i$ \\
\cline{2-3}
& $p_{N_i}$ & Cost per resource unit used of node $N_i$ \\
\cline{2-3}
& $D_{AP, S}(N_i,N_j)$ & Delay between $N_i$ and $N_j \in V_{AP}(G_I) \cup V_{S}(G_I)$ \\
\cline{2-3}
& $D_{M_q}(N_i,N_j)$ & Delay between $N_i$ and $N_j \in V_{RC_q}(G_I)$ \\
\cline{2-3}
& \multirow{2}{*}{$\mathbb{P}_{bat} (N_i,C_{N_i})$} & Probability of having battery for the whole \\
& & time interval using $C_{N_i}$ resources \\
\hline
\multirow{5}{*}{$AP_k$} & $d_{AP_{k}}$ & Delay for the coverage area of $AP_k$ \\
\cline{2-3}
& \multirow{2}{*}{$\mathbb{P}_{AP^q_k}(t_u)$} & Probability of cluster $q$ to be in the coverage \\ 
& & area of $AP_k$ in time subinterval $t_u$ \\
\cline{2-3}
& $p_{AP_k}$ & Cost of usage of $AP_k$ \\
\cline{2-3}
& $\kappa_{q}$ & Coverage probability threshold for cluster $q$ \\
\hline
\hline
\multicolumn{3}{c}{\textbf{Variables}} \\ 
\hline
\hline
\multicolumn{2}{c||}{$d_{G_s}(t_u)$} & Delay of SFC $G_s$ in time $t_u$ \\
\hline
\multicolumn{2}{c||}{$d(N_i, N_j, t_u)$} & Delay between nodes $N_i$ and $N_j$ in time $t_u$ \\
\hline
\multicolumn{2}{c||}{$x(v, N_i)$} & Placement of VNF $v$ in node $N_i$ \\
\hline
\multicolumn{2}{c||}{$C_{N_i}$} & Resource usage in node $N_i$ \\
\hline
\multicolumn{2}{c||}{$AP^q_{k}(t_u)$} & Usage of $AP_k$ by cluster $q$ time $t_u$ \\
\hline
\multicolumn{2}{c||}{$\mu: V(G_S) \mapsto V_H(G_I)$} & VNF to host node mapping structure \\
\hline
\multicolumn{2}{c||}{$\alpha: \{t_u\} \times \{q\} \mapsto V_{AP}(G_I)$} & AP selection structure for all clusters \\
\end{tabular}
\end{adjustbox}
\end{table*}
The network infrastructure is represented by a graph $G_I$, where the nodes $V(G_I)$ contain NR and LTE RUs, generally referred to as Access Points (APs) $V_{AP}(G_I)$, server nodes (representing edge or cloud servers) $V_S(G_I)$, and mobile nodes $V_M(G_I)$.
Hence, the vertex set of the graph is built up as $V(G_I)=V_{AP}(G_I)\cup V_S(G_I)\cup V_M(G_I)$. 
Host nodes $N_i$ with computation capacities $\overline{C}_{N_i}$ are stored in $V_H(G_I)=V_S(G_I)\cup V_M(G_I)$, and their corresponding unitary price is represented by $p_{N_i}$.
As a realistic generalization to the mobile robotics use case, the concurrent management of multiple robot clusters is assumed.
The subsets of $V_M(G_I)$ define the clusters of robots $V_{RC_q}(G_I) \subseteq V_M(G_I), 1 \leq q \leq Q$, where $Q$ refers to the number of clusters. Moreover, graph edges $E(G_I)$ represent the connections between the infrastructure nodes, which are annotated by their transmission delays.
Due to the mobile clusters' mobility, their connections to the static part of the infrastructure are not represented by edges in $G_I$.

The mobile nodes $V_M(G_I)$ are connected to access points $V_{AP}(G_I)$ in order to communicate with other nodes of the infrastructure. However, the nodes are moving and may change the AP along time or be in an area where several access points have coverage. Thus, 
this work assumes that each AP has a coverage area $AP_k$, and 
the mobility pattern of robot cluster $q$ is modeled by the distribution of being in the AP coverage areas $\mathbb{P}_{AP^q_k}(t)$.
The parameter $t$ is a time instant within an interval $(t_0,t_1)$ in which the network service will be running.
For the sake of simplicity in the model, the time interval is discretized in subintervals, thus continuous time $t$ becomes discrete time $t_u$. This discretization of time defines the handovers between access points during the whole interval, hence, in different subintervals of time the robot cluster's associated AP can be different.
The cost of using an AP for a single subinterval $t_u$ by any single cluster is $p_{AP_k}$.
The energy consumption of the mobile nodes is modeled by the distribution $\mathbb{P}_{bat} (N_i,C_{N_i})$ depending on the allocated load to node $N_i$, which represents the probability of having a not depleted battery for the whole interval $(t_0,t_1)$.
Probability distributions $\mathbb{P}_{AP^q_k}(t)$ and $\mathbb{P}_{bat} (N_i,C_{N_i})$ describe the dynamics of the mobile part of the infrastructure, which are used by the optimization system for the service deployment in time interval $(t_0, t_1)$.

The requested Network Services are represented with a NS graph $G_S$, with the nodes being VNFs $v\in V(G_S)$ and their capacity requirements $C_v$.
Each Service Function Chain (SFC) is a subgraph $G_s \subseteq G_S$ with its own set of VNFs and path, as the one depicted in the NS graph of Figure~\ref{fig:use-case-illustration}, and expressed in Eq.~\eqref{eq:serviceGraph}.
%
\begin{align}
    \mathcal{C}_{SFC} = \big\{(G_s, \Delta_{G_s}) ~ | ~ &V(G_s) \subseteq V(G_S), \label{eq:serviceGraph}\\
    &E(G_s) \in \mathcal{P}(G_S), \Delta_{G_s} \in \mathbb{R}^+\big\} \nonumber
\end{align}
%
where $\mathcal{C}_{SFC}$ represents the set of SFCs in Network Service $G_S$, and $\mathcal{P}(G_S)$ represents the paths of the NS graph $G_S$.
Each SFC has a corresponding delay requirement $\Delta_{G_s}$ which defines an upper bound of the total delay of the SFC path $E(G_s)$.

For a better understanding of the model, all the notation used for the mathematical formulation of the optimization problem are gathered in Table~\ref{tab:notation}.

\subsection{Optimization problem}
\label{sec:optimizationModel}

\floatstyle{ruled}
\newfloat{form}{t!}{lop}
\floatname{form}{Formulation}

\begin{form}
    \begin{IEEEeqnarray}{ll}
        x(v, N_i) \in \{0, 1\} \qquad \forall v\in V(G_S), \forall N_i\in V(G_I) \label{eq:binary-variables} \\
        \sum_{N_i \in V(G_I)} x(v, N_i) = 1 \quad \forall v\in V(G_S) \label{eq:map-to-one-place} \\
         C_{N_i} \leq \overline{C}_{N_i}, \quad \forall N_i\in V(G_I) \label{eq:resource-limit} \\
         C_{N_i}=\sum_{v\in V_S(G_I)} x(v,N_i)C_v, \quad \forall N_i\in V(G_I) \label{eq:alloc-resources} \\
         x(v, N_i) \leq L(v, N_i), \quad \forall v \in V(G_S), \forall N_i \in V(G_I) \label{eq:locality} \\
         \sum_{AP_k \in V_{AP}(G_I)} \hspace{-1em}AP^q_k(t_u) = 1, ~ \forall 1 \leq q \leq Q, \forall t_u \in (t_0,t_1) \label{eq:only-one-ap-several-clusters} \\
        \sum_{AP_k \in V_{AP}(G_I)}\hspace{-1em} AP^q_k(t_u) \cdot \mathbb{P}_{AP^q_k}(t_u) \ge \kappa_q ,
        \quad \substack{\forall 1 \leq q \leq Q \\ t_u \in (t_0,t_1)} \label{eq:coverage-constraint-several-clusters} \\
    \rm{delay} \ \rm{equation}  \ \rm{is} \ \rm{presented} \ \rm{in} \ \rm{Eq.}~\eqref{eq:delay} \nonumber \\
    d_{G_s}(t_u)=\hspace{-1em} \sum_{\substack{(v_i,v_j) \in E(G_s) \\N_i, N_j \in V(G_I)}} \hspace{-1em}x(v_i,N_i) x(v_j,N_j) d(N_i, N_j, t_u) \label{eq:delayService} \\
    d_{G_s}(t_u) \leq \Delta_{G_s},~\forall (G_s, \Delta_{G_s}) \in \mathcal{C}_{SFC}, \forall t_u \in (t_0,t_1) \label{eq:delayThreshold} \\[1em]
    \mathbb{P}_{bat} (N_i,C_{N_i}) = \mathbb{P}_{bat} (N_i,0) -   \nonumber\\
    \quad- \dfrac{C_{N_i}}{\overline{C}_{N_i}} \left(\mathbb{P}_{bat} (N_i,0) -\mathbb{P}_{bat} (N_i,\overline{C}_{N_i})\right), \forall N_i \in V_M(G_I) \nonumber\\
    \label{eq:battery} \\
    \mathbb{P}_{bat} (N_i,C_{N_i}) \geq th_{bat}^{s},~ \forall N_i \in V_M(G_I),~\forall G_s \in \mathcal{C}_{SFC} \label{eq:batteryThreshold} \\[2em]
    \min \sum_{N_i \in V(G_I)} C_{N_i} \cdot p_{N_i} + \sum_{t_u,q,k} AP^q_k(t_u)\cdot p_{AP_k} \label{eq:objectiveFunction}
    \end{IEEEeqnarray}

    \caption{Optimization problem}
    \label{problem:optimizationModel}
\end{form}

\begin{figure*}[ht!]
    \begin{equation}
    \label{eq:delay}
    \resizebox{\textwidth}{!}{$%
d(N_i, N_j, t_u) =\begin{cases}
\hfill\mathlarger{\sum}\limits_{AP_k \in V_{AP}(G_I)} AP^q_k(t_u)[D_{M_q}(N_i, N^{(r_q)}) + d_{AP_k} + D_{AP,S}(AP_k, N_j)], &  \text{if}~ N_i \in V_{RC_q}(G_I) \land N_j \in V_S(G_I); \\%
\hfill d(N_j, N_i, t_u), &  \text{if}~ N_j \in V_{RC_q}(G_I) \land N_i \in V_S(G_I); \\%
\hfill D_{AP,S}(N_i, N_j), &  \text{if}~ N_i, N_j \in V_S(G_I) \cup V_{AP}(G_I); \\%
\hfill D_{M_q}(N_i, N_j), &  \text{if}~ N_i, N_j \in V_{RC_q}(G_I); \\%
\mathlarger{\sum}\limits_{\substack{AP_{k_1}, AP_{k_2}\\ \in V_{AP}(G_I)}} AP_{k_1}^{q_i}(t_u) AP_{k_2}^{q_j}(t_u) \left( D_{AP,S}(AP_{k_1},AP_{k_2}) + d_{AP_{k_1}} + d_{AP_{k_2}} + \mathlarger{\sum}\limits_{n\in\{i,j\}}  D_{M_{q_n}}(N_n,N^{(r_{qn})}) \right), & \text{if}~ N_i \in V_{RC_{q_i}}(G_I) \land\ N_j \in V_{RC_{q_j}}(G_I) \\%
\end{cases} 
$%
}%
\end{equation}
\end{figure*}
%
%
%
Our optimization problem is summarized in Formulation~\ref{problem:optimizationModel} and described in details below. 
The optimization must decide which infrastructure node $N_i \in V(G_I)$ should host which VNF $v\in V(G_S)$, this is represented by the binary decision variable $x(v,N_i)$ and constraints Eq.~\eqref{eq:binary-variables} and Eq.~\eqref{eq:map-to-one-place}.
The resource capacities $\overline{C}_{N_i}$ must be respected by the load allocation on each node $N_i$.
This requirement is gathered in Eq.~\eqref{eq:resource-limit},
%
where $C_{N_i}$ stands for the allocated resources in infrastructure node $N_i$ as presented in Eq.~\eqref{eq:alloc-resources}.


Furthermore, there may be a necessity of applying placement policies and VNF functional types. 
In order to include those policies in the model, the matrix $L(v, N_i)$ expresses locality constraints between the VNFs $v \in V(G_S)$ and infrastructure node $N_i \in V(G_I)$. 
Each element of the matrix is a binary constant, identifying whether the VNF can be located in an infrastructure node, as expressed in Eq.~\eqref{eq:locality}.
%



\subsubsection{Radio coverage constraints}
\label{subsubsec:radio-constraints}
The deployment must also decide at each time interval to which access point each cluster of robots is attached to, that is, $AP^q_k(t_u)=1$ in case robot cluster $RC_q$ is connected to access point $AP_k$ at time $t_u$.
Eq.~\eqref{eq:only-one-ap-several-clusters} reflects the assumption that each cluster can only be attached to one AP at each interval. 
The deployment decision must also ensure that the coverage probability is above the imposed threshold $\kappa_q$ for mobile cluster $q$, as stated in  Eq.~\eqref{eq:coverage-constraint-several-clusters}.
%

%
%
\subsubsection{Delay constraints}
\label{subsubsec:delay-constraints}
In order to measure the distances between infrastructure nodes, the metric used is the delay, which in the case of the static nodes is given in a matrix containing the precomputed and the time-independent delays, $D_{AP,S}(N_i, N_i) ~\forall N_i, N_j \in V_S(G_I) \cup V_{AP}(G_I)$. 
Similarly, the distances inside each mobile cluster are time invariant, precalculated and stored in matrix $D_{M_q}(M_i, M_j) ~\forall M_i, M_j \in V_{RC_q}(G_I), 1 \leq q \leq Q$.

Each access point $AP_k \in V_{AP}(G_I)$ provides a time- and distance-independent delay to its whole coverage area, its value is denoted by $d_{AP_k}$, while delay between APs is given with the value,\linebreak $D_{AP,S}(AP_{k_1},AP_{k_2})$.
The delay value between a mobile cluster and the static part of the infrastructure
and between mobile nodes belonging to two different clusters 
might vary according to the assigned APs during the time interval $(t_0, t_1)$.
A mobile cluster $q$ has an appointed relay node $N^{(r_q)}$, which is connected to the APs, and all the traffic of other mobile nodes of the same cluster towards the fixed part of the infrastructure goes through the corresponding relay mobile node. Thus, the orchestration system can execute the handover of the cluster by only connecting the relay node to a different AP.
This way the delay of device-to-device communication is accounted in a different variable than the AP delays.
Hence, the general delay function which covers any pair of infrastructure node types is expressed in Eq.~\eqref{eq:delay}.

The overall delay of a SFC $G_s \in \mathcal{C}_{SFC}$ in time $t_u$ is formulated in Eq.~\eqref{eq:delayService}, where the delays between the hosts of each SFC edge are summed.
The upper bound of the SFCs' total permitted delay~$\Delta_{G_s}$ for the whole optimization interval is expressed in constraint Eq.~\eqref{eq:delayThreshold}.
%
%


\subsubsection{Battery constraints}
\label{subsubsec:battery-constraints}
In order to place VNFs in mobile nodes it is necessary to ensure the mobile node will not run out of battery during the time interval $(t_0, t_1)$. Hence, the probability of having battery based on the resources used in the node is modeled in Eq.~\eqref{eq:battery},
%
where 
$C_{N_i}$ is the consumed capacity of mobile node $N_i$, and
$\mathbb{P}_{bat} (N_i,C_{N_i})$ is the probability of having not empty battery on $N_i$ by the end of time interval $(t_0, t_1)$ with $C_{N_i}$ allocated capacity.
The probability $\mathbb{P}_{bat} (N_i,C_{N_i})$ is a linear function between the empty and the fully loaded states.
To ensure the proper performance of the mobile nodes, the battery life is guaranteed in Eq.~\eqref{eq:batteryThreshold} by a threshold $th_{bat}^s$ given per SFC $G_s$, for all nodes hosting VNFs. 
This threshold takes into account the battery of all the mobile nodes hosting the VNFs of the service.
%




\subsubsection{Cost minimization}
\label{subsubsec:cost-minimization}
Finally, the problem minimizes the total cost of allocating the whole service $G_S$ demanded and AP usages by all of the mobile clusters. 
Hence, the objective function is shown in Eq.~\eqref{eq:objectiveFunction}.
The VNF mapping $\mu$ and AP selection structures $\alpha$ are defined by the variables $x(v, N_i)$ and $AP_k^q(t_u)$ of a solution to the optimization problem.
This model is not linear in some equations as the one representing the delay in Eq.~\eqref{eq:delayService}, but each product of two variables can be easily linearized due to the fact that all the variables involved are binary variables. Thus, the linearization is performed by substituting each product of two binary variables by one extra binary variable.
%


\section{Heuristic}
\label{sec:heuristics}

This section details the design of the heuristic which exploits the peculiarities of the system model to design an efficient and practical algorithm.

\subsection{Proposed heuristic}

The core idea of our heuristic algorithm is to use the fractional optimal solution of a bin packing problem of the VNFs and host nodes, which is deterministically rounded to an invalid integer solution. 
Next, the algorithm iteratively resolves the capacity, delay, battery and coverage constraint violations by changing the mapping location of VNFs in the initial invalid integer solution until a feasible mapping is found.


First, we introduce the bin packing problem variation with variable bin and item sizes supporting linear usage costs \cite{cambazard:hal-00858159} in Formulation~\ref{problem:binpacking}. Lemma~\ref{lemma:fractopt} states how to construct a fractional optimal solution for this bin packing variant, relaxing the integrality constraint. 

\begin{problemformulation}[t!]
    \begin{align}
         \sum_{N_i \in V_H(G_I)} x(v, N_i) = 1   \quad&\forall v \in V(G_S) \\
         \sum_{v\in V(G_S)} x(v,N_i)C_{v} \le \overline{C}_{N_i} \quad& \forall N_i\in V_H(G_I)\\
         x(v, N_i) \in \{0,1\} \quad\forall v&\in V(G_S),  N_i\in V_H(G_I)\label{const:binpacking:integrality}\\
         min \sum_{N_i \in V_H(G_I)} C_{N_i} \cdot &p_{N_i}
    \end{align}
\caption{Bin Packing with Usage Cost \cite{cambazard:hal-00858159}\\
Input: VNFs $V(G_S)$ as items with weight, host nodes $V_H(G_I)$ as bins with capacity\\
Output: VNF placement respecting only capacity constraints}
\label{problem:binpacking}
\end{problemformulation}

\begin{lemma}[Fractional optimal solution of Formulation~\ref{problem:binpacking} \cite{cambazard:hal-00858159}]
    Let $\{a_i\}$ be a permutation of all host infrastructure nodes $N_i \in V_H(G_I)$ in ascending order by their unit costs of computation capacity $p_{a_1} \leq p_{a_2} \leq \dots \leq p_{a_{|V_H(G_I)|}}$. 
    Let $W_C = \sum_{v\in V(G_S)} C_{v}$ be the sum of all VNF capacities. 
    Let $b$ be the minimum number of host nodes in order $\{a_i\}$ where $\sum_{i=1}^b \overline{C}_{N_{a_i}} \geq W_C$. 
    
    The fractional optimal solution (discarding the integrality constraint \eqref{const:binpacking:integrality}) of Formulation~\ref{problem:binpacking} is
    \begin{equation*}
        \Tilde{x}(v,N_{a_i}) = 
        \begin{cases}
            \frac{\overline{C}_{N_{a_i}}}{W_C} & \text{if} \quad i < b,\\
            \frac{W_C - \sum_{i=1}^{b-1}\overline{C}_{N_{a_i}}}{W_C} & \text{if} \quad i = b,\\
            0 & \text{if} \quad i > b;
        \end{cases}
        \quad \forall v \in V(G_S).
    \end{equation*}
    \label{lemma:fractopt}
\end{lemma}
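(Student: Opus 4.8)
The plan is to verify directly that the proposed assignment $\tilde{x}$ is (i) feasible for the LP relaxation of Formulation~\ref{problem:binpacking} and (ii) matches a lower bound on the objective, so that it must be optimal. Since the assignment is identical for every VNF $v$, it is convenient to think of the items as a single divisible mass of total weight $W_C$ that is being spread across the bins $a_1, a_2, \dots$ in cost order.

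First I would check feasibility. The equality constraint $\sum_{N_i} \tilde{x}(v,N_i) = 1$ follows by summing the three cases: the first $b-1$ terms contribute $\sum_{i=1}^{b-1}\overline{C}_{N_{a_i}}/W_C$, the $i=b$ term contributes the remaining fraction $(W_C - \sum_{i=1}^{b-1}\overline{C}_{N_{a_i}})/W_C$, and the rest are $0$, summing to $1$. Nonnegativity of the $i=b$ term is exactly the statement that $b$ is \emph{minimal} with $\sum_{i=1}^b \overline{C}_{N_{a_i}} \ge W_C$, i.e.\ $\sum_{i=1}^{b-1}\overline{C}_{N_{a_i}} < W_C$ (for $b\ge 1$; if $b=1$ the sum is empty and the term is $W_C/W_C=1\le 1$). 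For the capacity constraint, on bin $a_i$ with $i<b$ the load is $\sum_{v} \tilde{x}(v,N_{a_i}) C_v = (\overline{C}_{N_{a_i}}/W_C)\sum_v C_v = \overline{C}_{N_{a_i}}$, so the bin is exactly full; on $a_b$ the load is $W_C - \sum_{i=1}^{b-1}\overline{C}_{N_{a_i}} \le \overline{C}_{N_{a_b}}$, which holds precisely because $\sum_{i=1}^b \overline{C}_{N_{a_i}} \ge W_C$; bins $a_i$ with $i>b$ carry load $0$. Hence $C_{N_{a_i}} \le \overline{C}_{N_{a_i}}$ everywhere, and $\tilde{x}\in[0,1]$, so $\tilde{x}$ is LP-feasible.

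Next I would establish optimality via a matching lower bound. For any LP-feasible solution $x$, the total load placed is $\sum_{N_i} C_{N_i} = \sum_{N_i}\sum_v x(v,N_i)C_v = \sum_v C_v \sum_{N_i} x(v,N_i) = \sum_v C_v = W_C$, so every feasible solution distributes the same total mass $W_C$. The objective $\sum_{N_i} C_{N_i} p_{N_i}$ is therefore a weighted average that is minimized by pushing as much mass as possible onto the cheapest bins; formally, among all vectors $(C_{N_i})$ with $0\le C_{N_i}\le \overline{C}_{N_i}$ and $\sum C_{N_i} = W_C$, the cost $\sum C_{N_i} p_{N_i}$ is minimized by the greedy "water-filling" assignment that fills $a_1,\dots,a_{b-1}$ to capacity and puts the remainder on $a_b$ — this is the standard exchange argument: if any feasible load vector has $C_{N_{a_i}} < \overline{C}_{N_{a_i}}$ while $C_{N_{a_j}} > 0$ for some $j>i$, moving an infinitesimal amount of mass from $a_j$ to $a_i$ does not increase (and strictly decreases when $p_{a_i} < p_{a_j}$) the objective. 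This greedy load vector is exactly the one induced by $\tilde{x}$, so $\tilde{x}$ attains the minimum over all feasible load vectors, and a fortiori over all LP-feasible $x$. Therefore $\tilde{x}$ is a fractional optimal solution.

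The only mild subtlety — the part I would be most careful about — is the boundary behavior of the index $b$: handling the degenerate cases $b=1$ (empty preceding sum), ties in the cost ordering $p_{a_i} = p_{a_{i+1}}$ (which make the optimum non-unique but do not affect the value, and $\tilde{x}$ is still \emph{an} optimum), and the edge case $\sum_{i=1}^b \overline{C}_{N_{a_i}} = W_C$ exactly (in which case the $i=b$ fraction equals $\overline{C}_{N_{a_b}}/W_C$ and bin $a_b$ is also exactly full). None of these causes real trouble; each reduces to reading off the defining inequalities for $b$. Aside from this bookkeeping, the argument is the textbook fractional-knapsack / greedy argument and presents no genuine obstacle.
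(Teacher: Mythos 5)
Your proof is correct. Note that the paper does not actually prove Lemma~\ref{lemma:fractopt}: its ``proof'' is a one-line deferral to the reference in which Formulation~\ref{problem:binpacking} originates, so any self-contained argument is necessarily a different route from the paper's. What you supply is the standard fractional-knapsack argument, and it is sound: (i) a direct feasibility check --- the three cases sum to one, the $i=b$ numerator is nonnegative by minimality of $b$, bins $a_1,\dots,a_{b-1}$ are exactly full, and $a_b$ is within capacity by the defining inequality for $b$; and (ii) the observation that every LP-feasible $x$ induces a load vector of total mass exactly $W_C$, so the problem reduces to minimizing $\sum_i C_{N_i}p_{N_i}$ over $\{0\le C_{N_i}\le\overline{C}_{N_i},\ \sum_i C_{N_i}=W_C\}$, which the exchange (water-filling) argument settles in ascending cost order --- precisely the load vector induced by $\tilde{x}$. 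The reduction from the assignment variables $x(v,N_i)$ to the aggregate load vector $(C_{N_i})$ is the one step that genuinely deserves the explicit sentence you give it, since optimality lives at the level of loads rather than individual item splits; it also explains why the lemma's particular uniform split is only \emph{one} of many optimizers. Your approach buys the reader a verifiable, elementary argument in place of a citation; the paper's buys brevity. One small piece of bookkeeping you assert but do not justify: $\tilde{x}\in[0,1]$ for $i<b$ follows because minimality of $b$ gives $\overline{C}_{N_{a_i}}\le\sum_{j\le i}\overline{C}_{N_{a_j}}<W_C$.
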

\begin{proof}
    Refer to \cite{cambazard:hal-00858159}, where Formulation~\ref{problem:binpacking} is originally stated.
\end{proof}


The proposed heuristic's core pseudo-code is shown in Algorithm~\ref{alg:vnfplacement}. 
Initially, the fractional optimal solution is retrieved and rounded to initial constraint-violating VNF placement, obeying only the locality constraints~\eqref{eq:locality} as shown in lines~\ref{line:fractoptrounding1}-\ref{line:fractoptrounding2}. 
The cost increasing order $\{a_i\}$ of mobile and server nodes are used from Lemma~\ref{lemma:fractopt} to involve additional hosts to the VNF placement pool, starting only from the first $b$ cheapest hosts.
In each iteration a set of violating items, respecting all 
constraints is calculated based on the temporary decisions stored in the current VNF placement function $\mu$.
Next, the iteration in lines~\ref{line:innermost-iter1}-\ref{line:innermost-iter2} collects improvement scores for moving a VNF which is involved in any constraint violation to any currently considered host node (i.e. until index $b'$).
Line~\ref{line:heuristicfilter} heuristically filters only the VNF relocations whose improvement score is higher than a configured improvement score limit $\Upsilon$.
The improvement cost is calculated by the cost difference of VNF $v$ on the current host $\mu(v)$ and the possible new host $N_{a_i}$.
If any allowed VNF replacement is found, 
update actions are taken and a current AP selection $\alpha$ is retrieved as shown in lines~\ref{line:updatemapping1}-\ref{line:updatemapping2}.
Otherwise, the algorithm exits the improvement operations, and the next cheapest mobile or server node is included in the search by increasing $b'$.
If a feasible solution is found after any inner iteration (see line~\ref{line:solutionfound}), the procedure returns the current VNF placement function $\mu$ and AP selection structure $\alpha$.
The presented algorithm could be easily extended to continue searching for better quality solutions at the price of increased running time.

\begin{algorithm}[t!]
    \caption{\\
    Input: service graph $G_S$, infrastructure $G_I$, improvement score limit $\Upsilon$, and all constraints from Sec.~\ref{sec:problem-formulation}\\
    Output: VNF placement $\mu: V(G_S) \mapsto V_H(G_I)$ and AP selection $\alpha: \{t_u\} \times \{1\dots Q\} \mapsto V_{AP}(G_I)$ satisfying all constraints}
    \label{alg:vnfplacement}
    \begin{algorithmic}[1]
        \Procedure{PlaceVNFsSelectAPs}{$G_S, G_I, \Upsilon$}
        \State \parbox[t]{220pt}{$\Tilde{x}(v,N_i), b, \{a_i\} \gets$ fractional solution based on Lemma~\ref{lemma:fractopt} for host nodes $V_H(G_I)$ and VNFs $V(G_S)$ \strut} \label{line:fractoptrounding1}
        \For {$v \in V(G_S)$}     \Comment{Round initial solution}
            \State \parbox[t]{200pt}{$\mu(v) \gets argmax_{N_i \in V_H(G_I)} \Tilde{x}(v,N_i)$ which obeys locality constraints \eqref{eq:locality}\strut}
        \EndFor \label{line:fractoptrounding2}
        \For {$b' \in \{b \dots |V_H(G_I)|\}$} \Comment{In order of $\{a_i\}$} \label{line:core-dominating-part}
            \State $\mathcal{V}, \mathcal{R} \gets $ \textsc{violatingVNFMappins}($\mu$) 
            \While {$\mathcal{V} \neq \emptyset$}
                \State $\mathcal{I} \gets \emptyset$ \Comment{Allowed improving VNF moves}
                \For{$v \in \mathcal{V}$, $i \in \{1 \dots b'\}$ \label{line:innermost-iter1}}
                    \State \parbox[t]{160pt}{\textbf{if} $\mu(v) \neq N_{a_i}$ \textbf{and} $\mu(v) = N_{a_i}$ obeys locality constraints \eqref{eq:locality} \label{line:meaningfulrelocation} \textbf{then}}
                        \If{$\Upsilon \leq$ \textsc{improveScore}($\mu$, $v$, $N_{a_i}$) \label{line:heuristicfilter}}
                            \State $impr\_cost \gets C_{v}(p_{N_{a_i}} - p_{\mu(v)})$
                            \State $\mathcal{I} \gets \mathcal{I} \cup \{(v, N_{a_i}, impr\_cost)\}$
                        \EndIf
                    \State \textbf{end if}
                \EndFor \label{line:innermost-iter2}
                \If{$\mathcal{I} \neq \emptyset$}
                    \State \parbox[t]{180pt}{$\mu(v) \gets N_{a_i} ~|~ (v, N_{a_i}, impr\_cost) \in \mathcal{I}$ \textbf{and} $impr\_cost$ is minimal \strut} \label{line:updatemapping1} 
                    \State $\mathcal{V}, \mathcal{R} \gets $ \textsc{violatingVNFMappins}($\mu$)
                    \State \parbox[t]{180pt}{$\alpha \gets$ retrieve AP selection from violation record $\mathcal{R}$\strut} \label{line:updatemapping2}
                \Else \textbf{~break}
                \EndIf
            \EndWhile
            \If {AP selection $\alpha$ is valid \textbf{and} \\
            \qquad \qquad \qquad VNF placement $\mu$ is valid}
                \State \Return $\mu$, $\alpha$ \label{line:solutionfound} \Comment{Solution found}
            \EndIf
        \EndFor
        \State \Return $\emptyset, \emptyset$ \Comment{Solution not found}
        \EndProcedure
    \end{algorithmic}
\end{algorithm}

All subsequently presented subroutines take the input 
of Algorithm~\ref{alg:vnfplacement}, but these are omitted from the pseudo-codes for readability.
\textsc{violatingVNFMappings} takes as input the current VNF placement function $\mu$ and returns a set of violating VNFs $\mathcal{V}$ and an information storage of the actual constraint violations $\mathcal{R}$.
Based on the current VNF placement $\mu$, the feasibility of AP selection for each robot cluster $q \in \{1\dots Q\}$ is checked using the subroutine \textsc{chooseAPs}.
If the AP selection is not possible, all VNFs of the causing SFC $G_s$ are added to $\mathcal{V}$ and the violation information is stored in constraint violation record $\mathcal{R}$.

Algorithm~\ref{alg:chooseAPs} shows the details of how the AP selection and its feasibility based on the placement function $\mu$ are derived for a given robot cluster $q\in \{1\dots Q\}$ for all temporal subintervals.
Line~\ref{line:affectedSFCs} chooses the affected SFCs $G_s$, which have any VNF mapped to the mobile nodes of the robot cluster $q$.
Given the current VNF placement $\mu$, the total delay used by the path of the whole SFC $E(G_s)$ can be calculated using the delay expression \eqref{eq:delay}.
Access points are chosen by discarding the ones which do not meet the coverage requirement $\kappa_q$ and finding the one with minimal delay among the remaining ones: 
\begin{equation}
    AP_l = \mathrm{argmin}_{AP_k \in V_{AP}(G_I) \cap \{ AP_\phi : \mathbb{P}_{AP_\phi^q}(t_u) \geq \kappa_q\} } (d_{AP_k})
    \label{eq:heuristic-ap-selection}
\end{equation}
These operations are done by the function \textsc{delayDistWithCoverageAndAPSelection}, which also ensures that the same AP is chosen for a given input robot cluster $q\in \{1\dots Q\}$ in subinterval $t_u$, no matter which input SFC it gets.
The algorithm discards the impractical option of placing the VNFs of a single SFC to distinct mobile clusters. 
This simplification is only applied for the delay bounded VNFs, not to the other VNFs of the network service $G_S$.
If an access point $AP_l$ is found for subinterval $t_u$ with the given requirements, the selection is saved in AP selection function $\alpha$, otherwise the structure is invalidated and the reason is saved in $\mathcal{R}^{AP}$, as shown by the logical structure starting at line~\ref{line:APfoundornot}.
In case the computation capacities of a robot cluster are not used by any VNFs of any SFC, an access point still needs to be selected for the cluster, which is done by minimizing the cost instead of the unbounded delay and similarly filtering to the coverage probability (see line~\ref{line:mincostAPselection}).

\begin{algorithm}[t!]
    \caption{\\
    Input: Current VNF placemnet $\mu$, current (possibly incomplete or invalid) AP selection $\alpha$, robot cluster index $q$\\
    Output: Extended and/or ivalidated AP selection $\alpha$, AP selection violation record $\mathcal{R}^{AP}$}
    \label{alg:chooseAPs}
    \begin{algorithmic}[1]
        \Procedure{chooseAPs}{$\mu, \alpha, q$}
        \For{$t_u \in (t_0, t_1)$}
            \If{$\exists G_s \in \mathcal{C}_{SFC}, \exists v \in V(G_s)$, \\ 
            \qquad \qquad where $\mu(v) \in V_{RC_q}(G_I)$ \label{line:affectedSFCs}}
                \State \parbox[t]{180pt}{$d^{G_s}, AP_l \gets $ \textsc{delayDistWithCoverageAndAPSelection}($E(G_s), \mu, t_u, q, \kappa_q$)
                \strut}
                \If{$d^{G_s} \leq \Delta_{G_s}$ \textbf{and} $\exists AP_l$ \label{line:APfoundornot}}
                    \State Let $\alpha(t_u, q) = AP_l$ \Comment{Same AP for all SFCs}
                \Else 
                    \State Let $\alpha(t_u, q) = \lceil $
                    \State Add result $d^{G_s}$ and SFC $G_s$ to $\mathcal{R}^{AP}$
                \EndIf
            \Else
                \State \parbox[t]{185pt}{Let $\alpha(t_u, q) = AP_l$, where $AP_l \in V_{AP}(G_I)$ and obeys coverage constraint~\eqref{eq:coverage-constraint-several-clusters} and $p_{AP_l}$ is minimal \label{line:mincostAPselection} \strut}
            \hspace{50pt}
            \EndIf
        \EndFor
        \State \Return $\alpha$, $\mathcal{R}^{AP}$
        \EndProcedure
    \end{algorithmic}
\end{algorithm}


Finally, the improvement score calculation is shown in Algorithm~\ref{alg:improvementScore}, which takes the current VNF placement $\mu$ and a possible relocation of VNF $v$ to $N_{a_i}$ as input, and outputs an integer whose higher value represents a more significant improvement.
The \textsc{improveScore} procedure uses the previously presented \textsc{violatingVNFMappings} function to evaluate how the mapping would change by the VNF mapping modification.
The mapping structure $\mu$ with less violating constraints is considered better, as shown in lines modifying the improvement score $y$. 
In case of capacity constraints, total improvement score $y$ would decrease, keep unchanged or increase if the number of hosts with more than their max capacity allocated would increase, stay or decrease by the VNF movement, respectively (see line~\ref{line:batteryImprovementScore}).
A similar score modification is done for each SFC, using the change in the number of temporal subintervals $t_u$ where the coverage or delay constraints are violated as shown by the iteration starting at line~\ref{line:delay-improvement-score}.
In the case of the battery constraints, the number of VNFs mapped to mobile nodes with violated battery thresholds are used.

\begin{algorithm}[t!]
    \caption{\\
    Input: Current VNF placement $\mu$, movement of VNF $v$ to host $N_{a_i}$\\
    Output: Integer in interval $[-|\mathcal{C}_{SFC}|-2, |\mathcal{C}_{SFC}|+2]$, the improvement score of the VNF movement}
    \label{alg:improvementScore}
    \begin{algorithmic}[1]
        \Procedure{improveScore}{$\mu$, $v$, $N_{a_i}$}
        \State $y \gets 0$ \Comment{Init. improvement score of moving $v$ to $N_{a_i}$}
        \State $\mathcal{V}, \mathcal{R} \gets $\textsc{violatingVNFMappings}($\mu$) 
        \State $\mathcal{V}', \mathcal{R}' \gets $\textsc{violatingVNFMappings}($\mu~|~ \mu(v)=N_{a_i}$)
        \State \parbox[t]{220pt}{$y \gets y -1/+0/+1$ if \textit{number of hosts} $N_i$ with violated constraint \eqref{eq:resource-limit} increases/stays/decreases in $\mathcal{R}'$ compared to $\mathcal{R}$} \label{line:batteryImprovementScore}
        \For{$G_s \in \mathcal{C}_{SFC}$} \label{line:delay-improvement-score}
            \State \parbox[t]{195pt}{$y \gets y -1/+0/+1$ if \textit{number of subintervals} $t_u$ with any invalid mappings (i.e. where $\exists t_u, q : \alpha(t_u, q) = \lceil$~) increases/stays/decreases in $\mathcal{R}'$ compared to $\mathcal{R}$}
        \EndFor
        \State \parbox[t]{217pt}{$y \gets y -1/+0/+1$ if \textit{number of VNFs} $v$ which are mapped to any mobile node $V_M(G_I)$ with violated battery constraint~\eqref{eq:batteryThreshold} increases/stays/decreases in $\mathcal{V}'$ compared to $\mathcal{V}$}
        \State \Return $y$
        \EndProcedure
    \end{algorithmic}
\end{algorithm}


\subsection{Complexity analysis}

A brief analysis on the heuristic's complexity and its termination is presented in Theorem~\ref{theorem:complexity} and its corresponding proof.

\begin{theorem}[Complexity of heuristic]
\label{theorem:complexity}
The overall complexity of the heuristic with positive improvement score limit $\Upsilon > 0$ is:
\begin{align}
    \mathcal{O}\Big( |V(G_S)|^4|V(G_I)|^3|\mathcal{C_{SFC}}|QT \Big)
\end{align}
where $Q$ and $T$ are the number of clusters and the number of subintervals $t_u$ in the optimization time frame $(t_0, t_1)$, respectively.
\end{theorem}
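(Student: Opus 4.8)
The plan is to bound the running time by multiplying together the iteration counts of the nested loops in Algorithm~\ref{alg:vnfplacement}, after first establishing that the \texttt{while} loop terminates. First I would argue termination of the inner \texttt{while} loop: each iteration that does not \texttt{break} performs exactly one VNF relocation, namely the one of minimal $impr\_cost$ among moves with \textsc{improveScore} at least $\Upsilon>0$. Since a positive improvement score strictly decreases the total number of violated constraints (a nonnegative integer bounded by, say, $|V(G_I)|+|\mathcal{C}_{SFC}|$ per the range stated in Algorithm~\ref{alg:improvementScore}), the \texttt{while} loop runs at most $\mathcal{O}(|V(G_I)|+|\mathcal{C}_{SFC}|)$ times; this is the place where $\Upsilon>0$ is essential, and I expect this termination argument to be the main subtlety, because one must confirm that \textsc{improveScore} genuinely reflects a monotone potential and that ties/zero-score moves are never taken.

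Next I would count the work per \texttt{while} iteration. The body runs the double loop over $v\in\mathcal{V}$ and $i\in\{1,\dots,b'\}$, which is $\mathcal{O}(|V(G_S)|\,|V(G_I)|)$ pairs, and for each pair calls \textsc{improveScore}. That subroutine's cost is dominated by two calls to \textsc{violatingVNFMappings}, so I would next bound the cost of one \textsc{violatingVNFMappings} call. That routine iterates over the $Q$ clusters calling \textsc{chooseAPs}, and Algorithm~\ref{alg:chooseAPs} loops over the $T$ subintervals, for each examining the affected SFCs and evaluating the delay expression~\eqref{eq:delay} along each SFC path; charging $\mathcal{O}(|V(G_I)|)$ per AP comparison and $\mathcal{O}(|E(G_s)|)=\mathcal{O}(|V(G_S)|)$ edges per SFC, one call of \textsc{violatingVNFMappings} costs $\mathcal{O}(|\mathcal{C}_{SFC}|\,|V(G_S)|\,|V(G_I)|\,QT)$ (the $|V(G_I)|$ absorbs both the AP-set size and a capacity/battery sweep over host nodes).

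Finally I would multiply: outer \texttt{for} over $b'$ contributes a factor $|V(G_I)|$; the \texttt{while} loop contributes $\mathcal{O}(|V(G_I)|+|\mathcal{C}_{SFC}|)$, which I would absorb into the other polynomial factors (it is dominated by $|V(G_S)|^2$, say, under the implicit assumption that these quantities are polynomially comparable, or one simply keeps the dominant term); the inner double loop contributes $|V(G_S)|\,|V(G_I)|$; and each \textsc{improveScore} call contributes $|\mathcal{C}_{SFC}|\,|V(G_S)|\,|V(G_I)|\,QT$. Collecting the factors of $|V(G_S)|$ (two from the iteration bounds plus the path-length factors inside \textsc{violatingVNFMappings}) yields $|V(G_S)|^4$, the factors of $|V(G_I)|$ yield $|V(G_I)|^3$, and the remaining terms give $|\mathcal{C}_{SFC}|\,Q\,T$, matching the claimed bound. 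The only genuinely delicate point is the termination/iteration-count argument for the \texttt{while} loop; the rest is a routine but careful bookkeeping of the loop nest, and I would present it as such, noting where slack in the bound (e.g. folding in the $\mathcal{O}(|V(G_I)|+|\mathcal{C}_{SFC}|)$ \texttt{while}-count) is absorbed.
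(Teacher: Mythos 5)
Your overall strategy---bound the nested loops, use $\Upsilon>0$ to bound the \texttt{while} loop, and charge the cost of \textsc{violatingVNFMappings} through \textsc{improveScore}---is the same as the paper's, but your accounting does not actually reach the stated bound without an unjustified assumption. The paper bounds the \texttt{while} loop by $\mathcal{O}(|V(G_S)|)$ iterations, arguing that with $\Upsilon>0$ the set $\mathcal{V}$ of violating VNFs loses at least one element per iteration; this is where one of the factors of $|V(G_S)|$ in the final bound comes from. You instead bound it by $\mathcal{O}(|V(G_I)|+|\mathcal{C}_{SFC}|)$ and then ``absorb'' that into $|V(G_S)|^2$. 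But $|V(G_I)|$ (infrastructure size) and $|V(G_S)|$ (service size) are independent parameters---the theorem assumes no relation between them, and in the experiments the infrastructure is fixed while $|V(G_S)|$ is varied down to small values---so $|V(G_I)|=\mathcal{O}(|V(G_S)|^2)$ cannot be assumed. Without it your product is $(|V(G_I)|+|\mathcal{C}_{SFC}|)\,|V(G_S)|^2|V(G_I)|^3|\mathcal{C}_{SFC}|QT$, which is not the claimed $|V(G_S)|^4|V(G_I)|^3|\mathcal{C}_{SFC}|QT$; your closing sentence that the $|V(G_S)|$ factors ``collect'' to $|V(G_S)|^4$ does not follow from your own intermediate bounds, which contain only two such factors (one from the loop over $v\in\mathcal{V}$ and one inside \textsc{violatingVNFMappings}). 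The paper gets the other two from the $\mathcal{O}(|V(G_S)|)$ \texttt{while}-loop bound and from charging $\mathcal{O}(|V(G_I)||V(G_S)|)$ (not $\mathcal{O}(|V(G_I)|)$) per SFC edge for the AP filtering inside \textsc{chooseAPs}.

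Two secondary points. First, your termination potential is misstated: a positive \textsc{improveScore} is a sum of per-category sign indicators ($-1/0/+1$ for capacity, for each SFC's subinterval count, and for battery), so $y\geq 1$ means more categories improved than worsened---it does not imply that the \emph{total} number of violated constraints decreases, since one category's count can jump by more than the others drop. The invariant the paper actually relies on is the shrinkage of $\mathcal{V}$ itself, not of an aggregate violation count; whichever potential you use, you must tie it to a quantity that provably decreases and whose initial value yields the factor you need. Second, the range $[-|\mathcal{C}_{SFC}|-2,\,|\mathcal{C}_{SFC}|+2]$ quoted from Algorithm~\ref{alg:improvementScore} bounds the \emph{score}, not the number of violated constraints, so it cannot be used to cap the potential as you do.
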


\begin{proof}
    Looking at Algorithm~\ref{alg:vnfplacement}, the fractional solution construction and its rounding are dominated by the iteration starting at line~\ref{line:core-dominating-part}, which is executed at most $|V(G_I)|$ times.
    Assuming a positive improvement score limit $\Upsilon$, the violating VNFs set $\mathcal{V} = \mathcal{O}(|V(G_S)|)$ decreases at least by one element in each iteration of the \emph{while} cycle.
    At most every iteration runs  \textsc{violatingVNFMappings}.
    Filtering for the allowed VNF movements in line~\ref{line:meaningfulrelocation} is done at most $\mathcal{O}(|V(G_S)||V(G_I)|)$ times, and in worst case for each of them we execute a \textsc{improveScore} subroutine call.
    These observations make Algorithm~\ref{alg:vnfplacement}'s complexity to be
        $\mathcal{O}\Big(|V(G_S)||V(G_I)|\big\{|V(G_S)||V(G_I)|$\textsc{improvScore} +                       \textsc{violatingVNFMappings}$\big\}\Big)$.
    The \textsc{violatingVNFMappings}'s complexity is dominated by $Q\mathcal{O}(\textsc{chooseAPs})$, because the other constraints can be checked in $\mathcal{O}(|V(G_I)||V(G_S)|)$ time.
    Access point filtering for sufficient coverage in a longest SFC can be done in $\mathcal{O}(|V(G_I)||V(G_S)|)$ time, which is done for all SFCs $|\mathcal{C}_{SFC}|$, all SFC edges $\mathcal{O}(|V(G_S)|)$ for all time subintervals $T$.
    Which gives $\mathcal{O}(\textsc{violatingVNFMappings}) = \mathcal{O}(QT|V(G_S)^2||V(G_I)||\mathcal{C}_{SFC}|)$.
    Similarly, \textsc{improvScore} is dominated by \textsc{violatingVNFMappings}'s complexity.
    Finally, a Floyd-Warshall algorithm is used to pre-calculate the all the delay matrices $D_{AP,S}$ and $D_{M_q}$ with complexity $\mathcal{O}(|V(G_I|^3)$, which is dominated by the previous operations.
    Substituting and ordering the $\mathcal{O}(\cdot)$ notations, the statement follows.
\end{proof}

\section{Evaluation and results}
\label{sec:results}

This section compares the performance of Sec.~\ref{sec:heuristics} heuristic, with the optimal solution of Sec.~\ref{sec:problem-formulation} formulation from various aspects. 
As integer programs are generally impractical due to the hardness of the problem, our heuristic is extensively evaluated to demonstrate its applicability.
The heuristic solutions are compared to the optimal solution obtained with Gurobi which finds a solution within a gap optimality of 3\%.
Such comparison is done for the mobile robotics use case of Sec.~\ref{sec:use-case}, where scalability is a critical issue due to the size of the infrastructure and service graphs.

\subsection{Experiment setup}



The presented evaluation scenario scales up the mobile robotics use case of Valencia city haven. 
A realistic 5G network infrastructure topology is considered with multiple types of wireless access points, while the service graph instances are random graphs. 
Many parameters of the experiment setting are examined during the presented simulations, varying the size of the input, SFC delay requirements, coverage probabilities and battery thresholds.

In order to generalize the service graphs and gain confidence in our simulations, series-parallel graphs are used to generate the network service topology $G_S$.
Figure~\ref{fig:use-case-illustration} (left side) shows an example of such graph. 
This graph class covers the structure of many data streaming applications, such as map-reduce topologies, and have been used in other realistic, industrial case studies for fog application allocation \cite{seriesParallelvnfs}.
The delay restriction applies to the whole SFC, which is a loop, starting and ending in
the lightweight VNFs running on the robots.
Among all the VNFs of the SFC, some of them are forced to run 
on top of the mobile robots' hardware $V_M(G_I)$, 
and the rest can run on top of any
server $V_S(G_I)$ or robot $V_M(G_I)$.
It is up to the heuristic and the optimization formulation, to decide where to deploy them.

Every experiment uses the 5G infrastructure characterization of~\cite{luca-5g-qos}~and~\cite{ituArch}, which considers Ultra Reliable Low Latency Communications. 
Table~\ref{table:infra} shows every infrastructure element considered
in the experiments, and Figure~\ref{fig:infrastructure}
illustrates the interconnection of the network
infrastructure.
Each M1 switch is located in the access ring of the
network, and it gathers the traffic of up to x6 LTE or
NR RUs.
Access rings have x6 M1 switches and x1 M2 switch,
all of them interconnected in a ring fashion.
Every M2 switch belongs to x4 access rings, and it
steers the traffic up to the aggregation ring, where
it is connected in a ring fashion with another x5
M2 switches. Experiments consider that edge and cloud
servers are reachable using M1 and M2 switches,
respectively.

\begin{figure}
    \centering
    \includegraphics[width=\columnwidth]{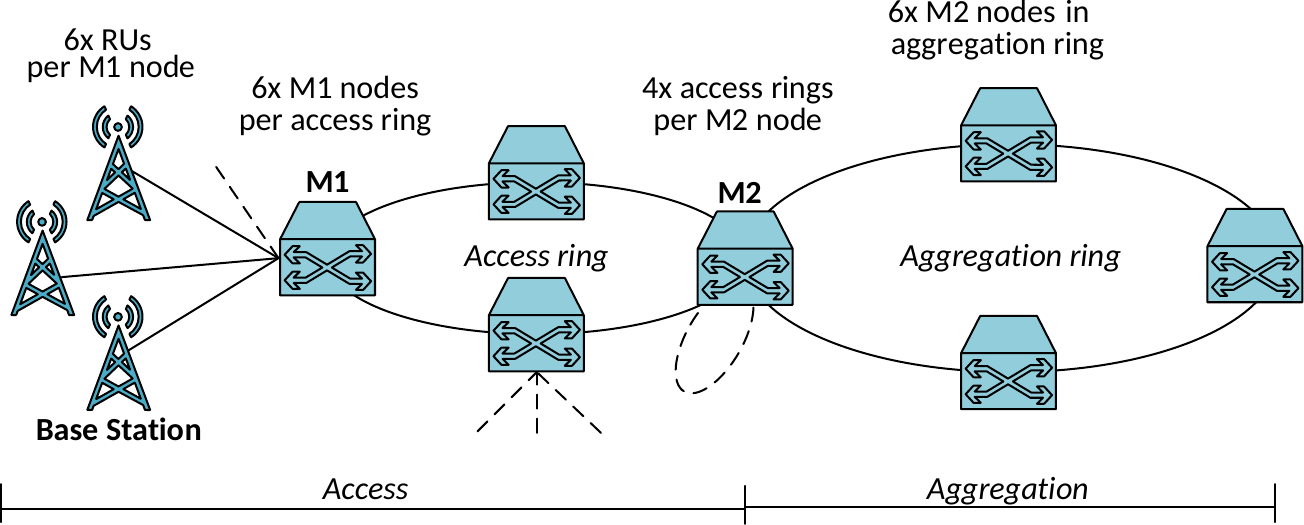}
    \caption{Interconnection of the experimentation network infrastructure.}
    \label{fig:infrastructure}
\end{figure}

Each point in the operation area of the robot cluster is covered at least by one LTE RU and at least by one NR RU.
The coverage probabilities for each time instance are derived by a function which maps the distance of the RU and the cluster to the coverage probability. 
The probability slightly decreases until the end of the RU coverage area,
and steeply drops to 0 at 120\% of the RU reach.
If a NR RU and the mobile cluster are not in LoS, the coverage probability
is 0, independent of their distance.
To achieve e2e delays demanded by the mobile robotics use case (between
1ms and 100ms), the experiment infrastructure assumes that
aggregation and access ring switches introduce packet processing delays between 1ms and 10ms, under the same characterisation as performed in~\cite{luca-5g-qos}.


To derive this section's results, a network infrastructure with just one cluster of robots has been generated with the 5GEN R package~\cite{5gen}.
Then, a Python script generates series-parallel NS graphs $G_S$ from which loop SFCs are selected. Robot cluster paths are encoded by coordinates which are used to calculate RUs coverage probabilities as robots move along the path.
Next, the Python script runs Sec.~\ref{sec:heuristics} heuristic to decide each VNF mapping on top of the infrastructure graph.
Sec.~\ref{sec:optimizationModel} formulation is encoded in AMPL~\cite{ampl}, and the Python script invokes Gurobi 8.1 solver~\cite{gurobi} through the \emph{amplpy} API to obtain the optimal mapping.
All the experiments have been executed on two identical VMs with x4 vCPUs, 32GB of memory, and 132GB of disk.

\begin{table}
    \caption{Infrastructure used for experimentation}
    \label{table:infra}
    \begin{adjustbox}{max width=\columnwidth}
    \begin{tabular}[T]{r | l | p{55mm}}
        \textbf{\#} & \textbf{Element} & \textbf{Characteristics}\\ \hline \hline
        x2 & LTE RU~\cite{3gppLTE8} & 8km radio coverage, 5ms one way delay~\cite{ltertt}, 5.5 cost units OPEX~\cite{ltecost}\\ \hline
        x36 & NR RU~\cite{nrDelays} & 700m LoS coverage~\cite{nrCoverage}, 1ms one way delay, 11 cost units OPEX\\ \hline
        x10 & robots & x2 CPUs, 15.27 cost units/CPU~\cite{5gteconomical}\\ \hline
        x6 & edge server & x12 CPUs, 5.83 cost units/CPU~\cite{5gteconomical}\\ \hline
        x2 & cloud rack & 200 CPUs, 2.46 cost units/CPU~\cite{5gteconomical}\\ \hline
        x8 & M1 switch & x8 dedicated CPUs\\ \hline
        x6 & M2 switch & x238 dedicated CPUs\\ \hline
        x2 & access rings & fiber ring connection, $\le$6 M1 switches\\ \hline
        x1 & aggregation ring & fiber ring connection, x6 M2 switches
    \end{tabular}
    \end{adjustbox}
\end{table}
\raggedbottom


\subsection{Simulation results}
\begin{table*}[t]
    \centering
    \caption{Experiment parameters}
    \begin{adjustbox}{width=\textwidth}
    \begin{tabular}{c|c|c|c|c}
    \textbf{Parameter name/explanation}  & \multicolumn{4}{c}{\textbf{Value/range}} \\
    \hline
    \hline
    Experiment name & Scalability & Coverage & Delay & Battery \\ \hline
      Robot cluster path, see Figure~\ref{fig:use-case-illustration}   & S $\rightarrow$ D1 & S $\rightarrow$ D2 & S $\rightarrow$ D2 &  S $\rightarrow$ D1  \\ \hline
      Time interval count $t_u\in(t_0,t_1)$ & 24 & 24 & 24 & 24 \\ \hline
      Unloaded battery probability $\mathbb{P}_{bat} (N_i,0), \forall N_i\in V_M(G_I)$ & 99\% & 99\% & 99\% & 99\% \\ \hline
      Full loaded battery probability $\mathbb{P}_{bat} (N_i,\overline{C}_{N_i}), \forall N_i\in V_M(G_I)$ & 50\% & 80\% & 80\% & 50\% \\ \hline
      Battery probability threshold $th_{bat}^s$ & 40\% & 70\% & 70\% & 72\% \& 75\% \\ \hline
      Infrastructure delay sample count & 1 & 4 & 4 & 1 \\ \hline
      SFC delay [ms] $\Delta_{G_s}$ & 1000 & 5 & Varies & 1000 \\ \hline
      Randomized VNF vCPU requirement $C_v \forall v\in V(G_S)$ & 0.5 x \{0, \dots 4\} & 0.5 x \{0, \dots 4\} & 0.5 x \{0, \dots 4\} &  0.25 x \{0, \dots 4\} \\ \hline
      VNF count $|V(G_S)|$ & Varies & 10 & 10 & 26 \\ \hline
      VNF count bound to robots & 6 & 4 & 1 & Varies \\ \hline
      Coverage probability threshold $\kappa_q$ & 94\% & Varies & 94\% & 70\% \\ \hline
      Scenario repetition with different randomization seed & 14 & 24 & 20 & 14 \\ 
    \end{tabular}
    \end{adjustbox}
    \label{tab:simulation-params}
\end{table*}

This section presents the results of the extensive simulations, which studies the scalability and parameter sensibility of the previously presented algorithms. 
The details of the simulation parameters are shown in Table~\ref{tab:simulation-params} for each of the experiments. 
The cluster paths in the Valencia haven are represented by their source and target locations. 

All evaluation figures present boxplots, where the middle line shows the median (a.k.a. second quartile) of the dataset, while the body of the boxplots show the first and third quartiles (a.k.a. the medians of the first half and the second half of the dataset separated by its median).
The whiskers of the boxplots represent the datum which deviates from the boxplot body at most by 1.5 times the inter-quartile range, while outliers are individually plotted by circles which fall beyond the whiskers.

An input VNF placement problem with all previously presented constraints
is deemed feasible, if the AMPL implementation finds a valid solution that respects all
constraints in 30 minutes
(measured in wall-clock time).
In case of the heuristic, the timeout is reduced to 20 minutes.
All experiments were executed with $3\%$ optimality gap for AMPL, and with various improvement score limit values for the heuristic.

\newcommand{\colWidthMultiplier}{0.70}
\begin{figure*}
    \centering
    \subfloat[Scalability test: costs\label{fig:scalability:cost}]
    {\includegraphics[width=\colWidthMultiplier\columnwidth]{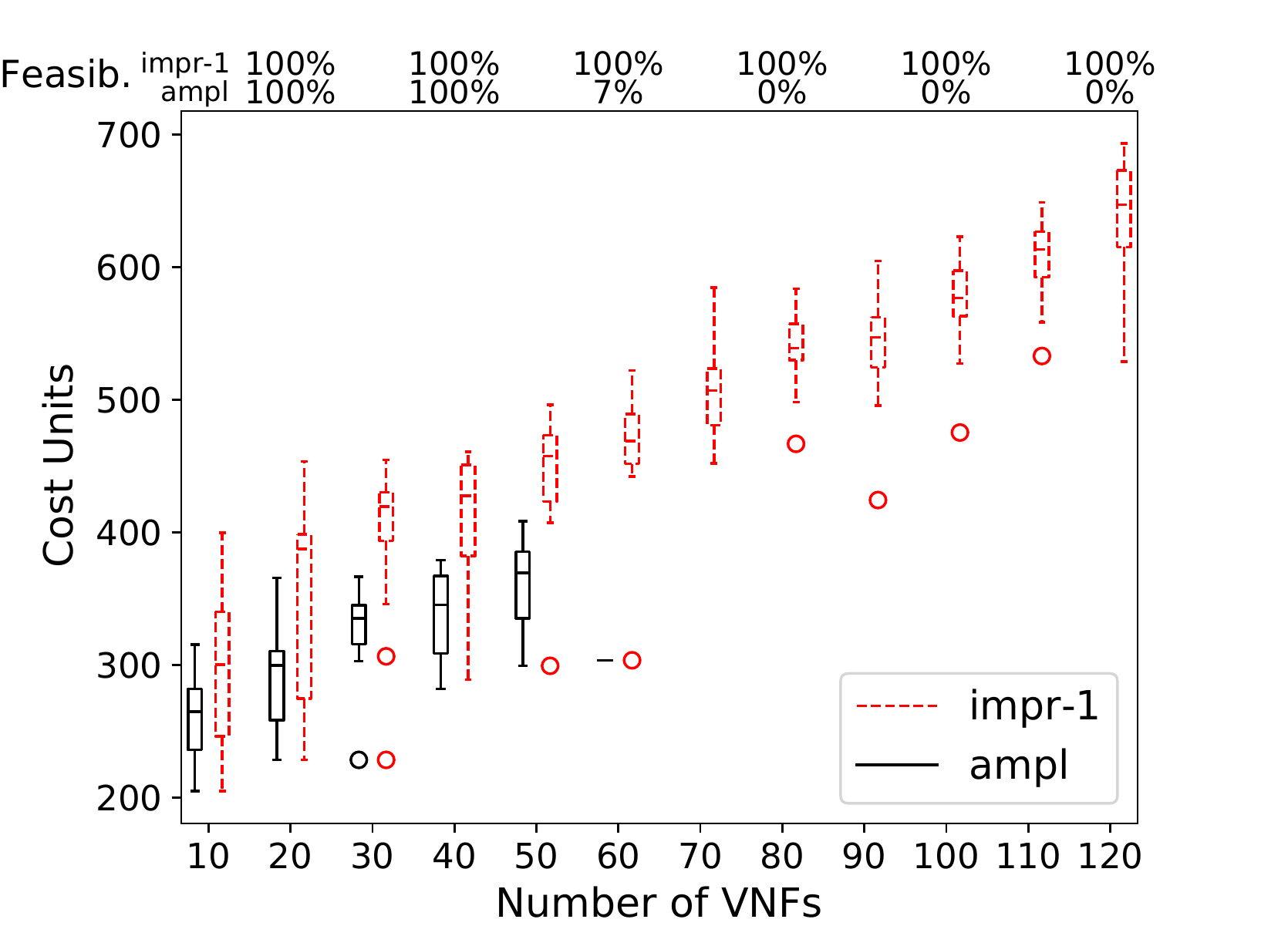}}
    \subfloat[Coverage threshold variation test: costs\label{fig:coverage:cost}]
    {\includegraphics[width=\colWidthMultiplier\columnwidth]{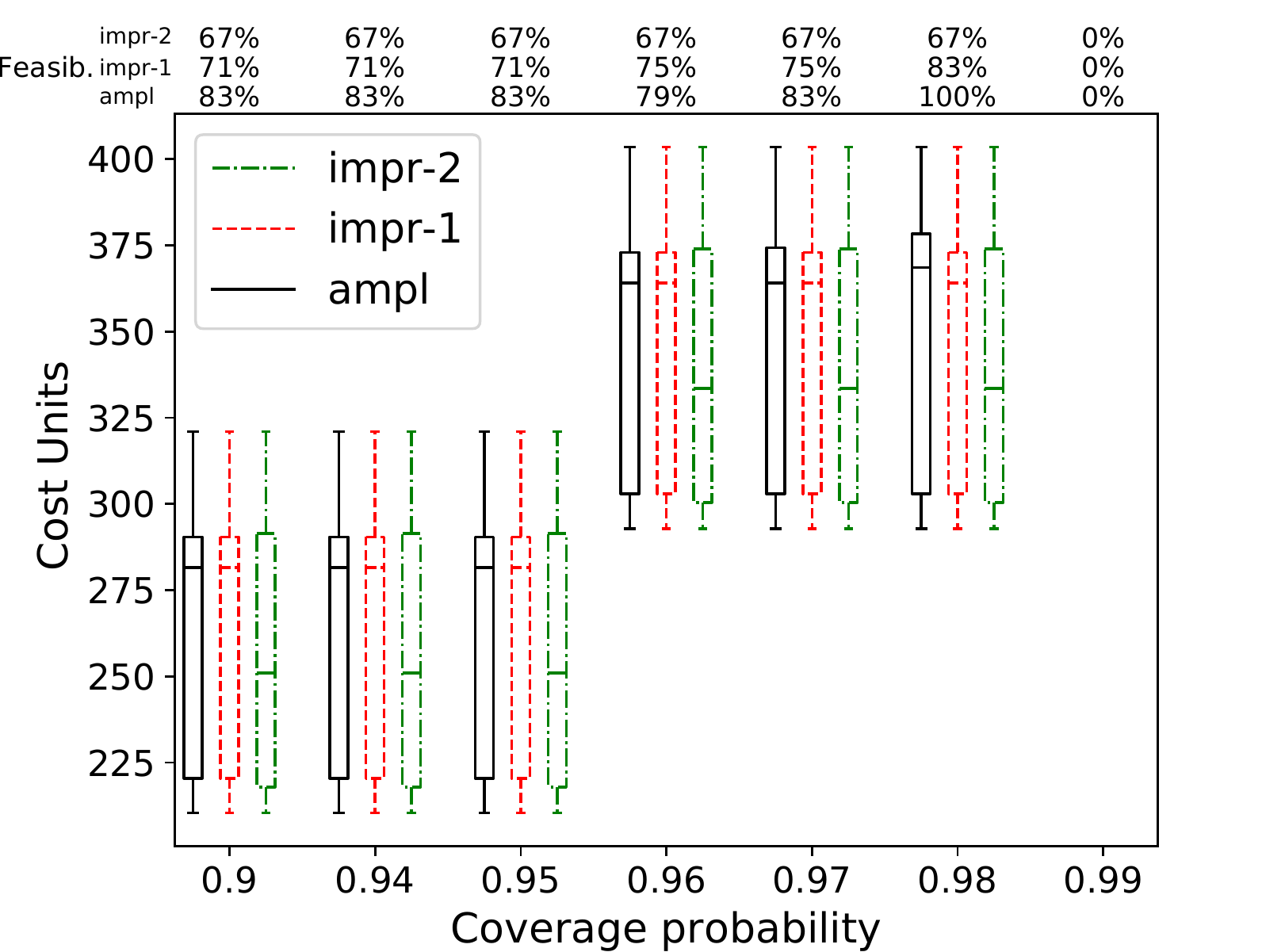}}
    \subfloat[SFC delay variation test: costs\label{fig:e2edelay:cost}]
    {\includegraphics[width=\colWidthMultiplier\columnwidth]{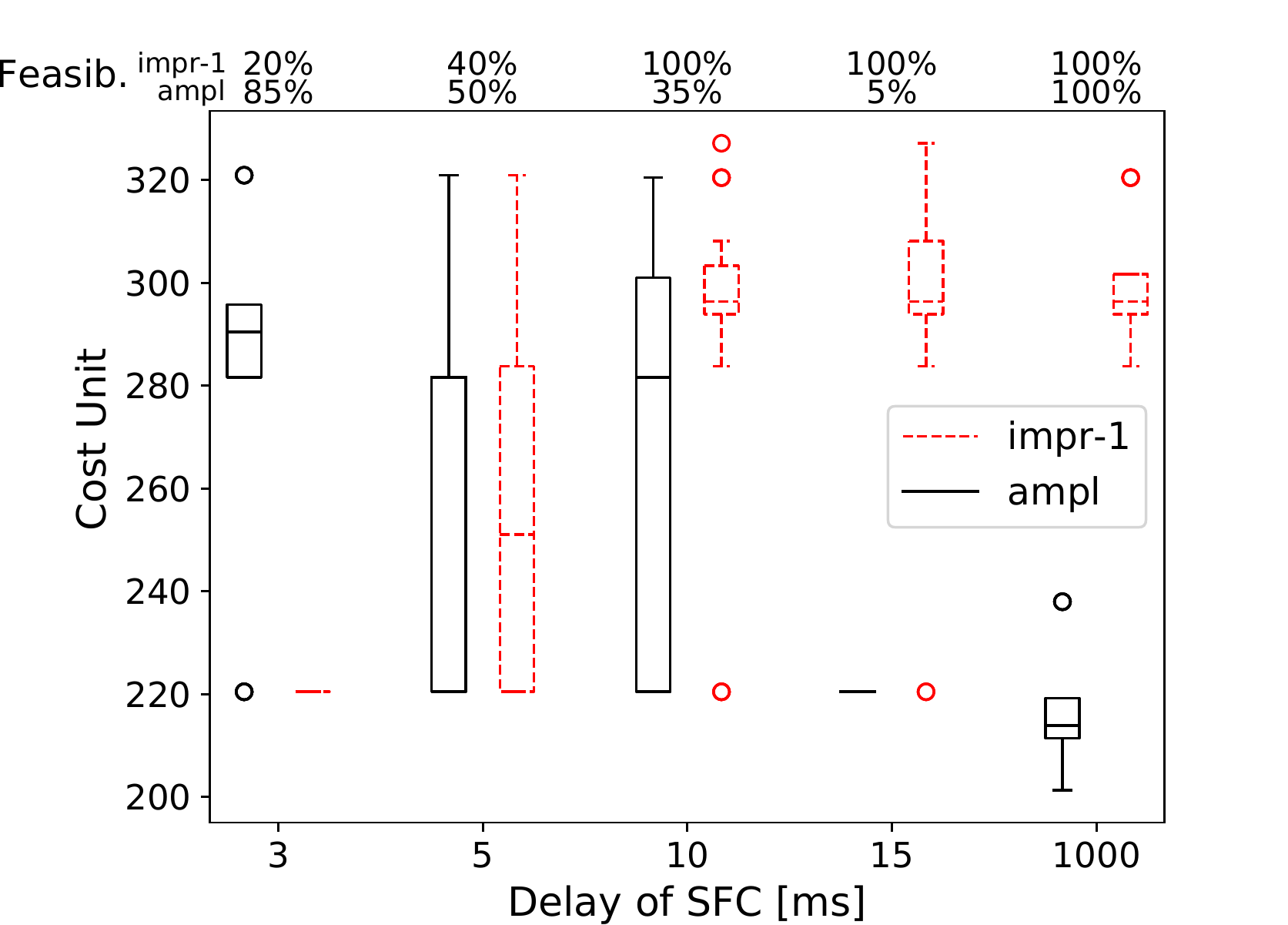}}
    \vspace{-15pt}
    
    \subfloat[Scalability test: runtimes\label{fig:scalability:time}]
    {\includegraphics[width=\colWidthMultiplier\columnwidth]{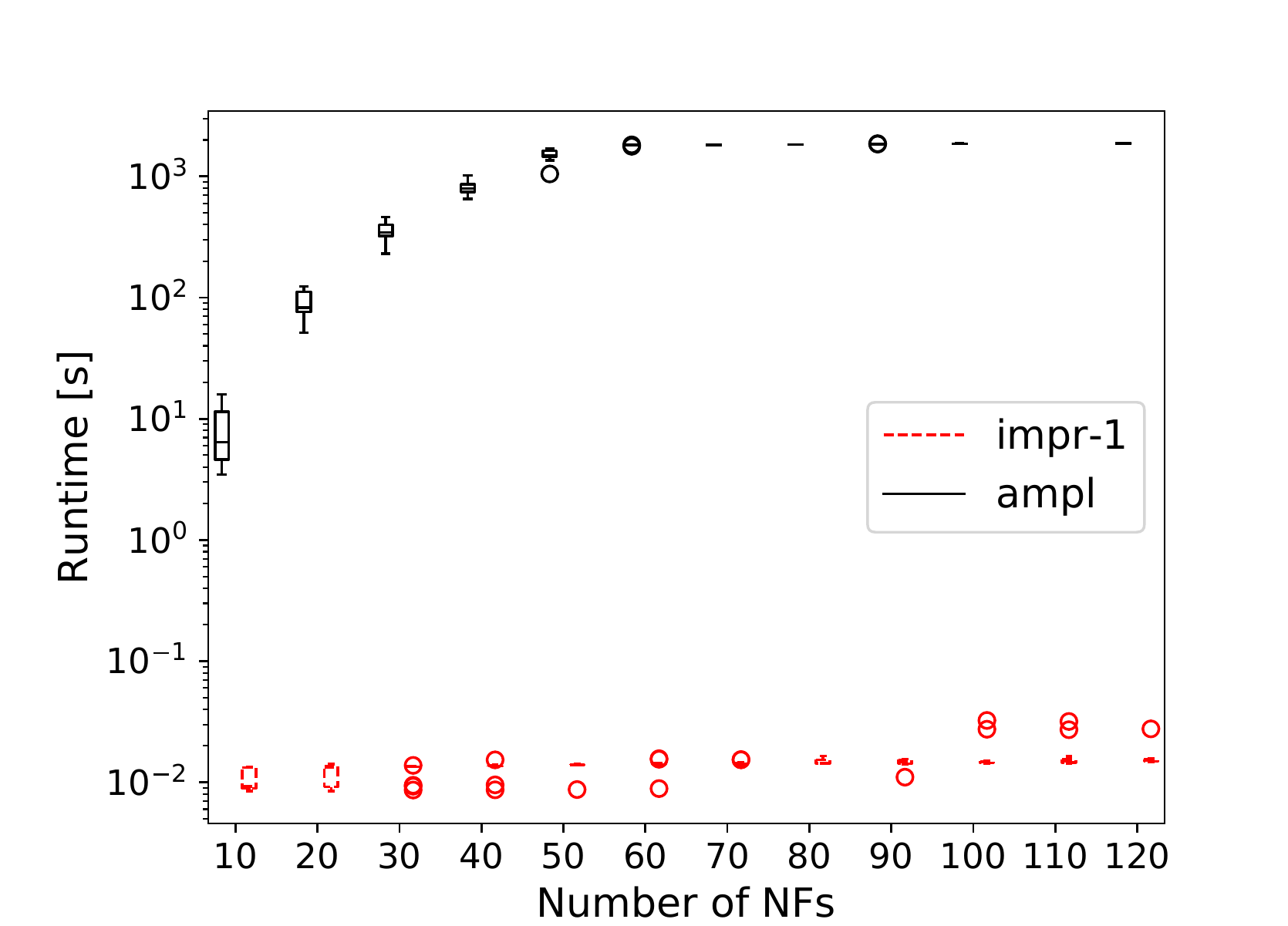}}
    \subfloat[Coverage threshold variation test: runtimes\label{fig:coverage:time}]
    {\includegraphics[width=\colWidthMultiplier\columnwidth]{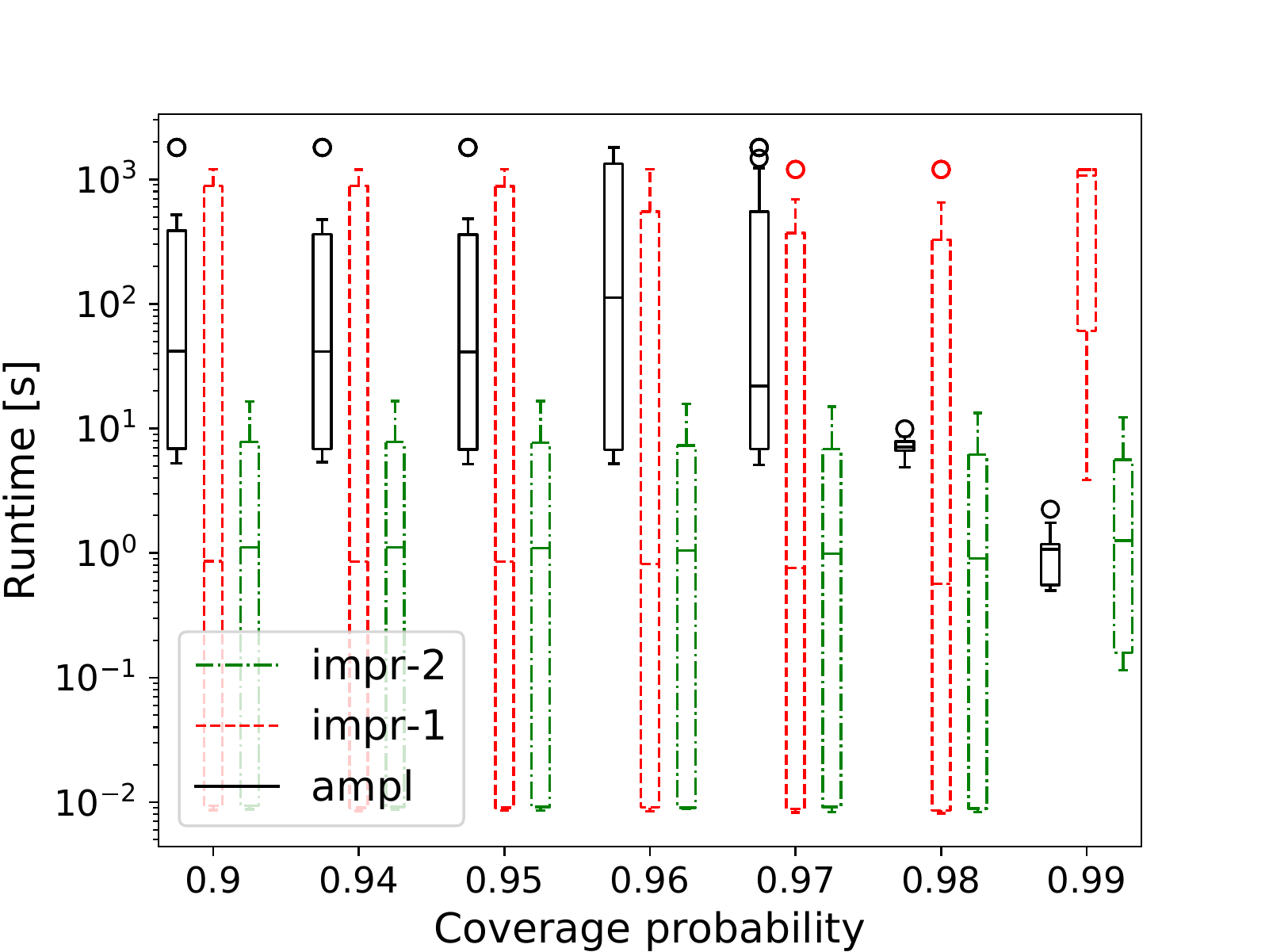}}
    \subfloat[SFC delay variation test: handover counts\label{fig:e2edelay:handover}]
    {\includegraphics[width=\colWidthMultiplier\columnwidth]{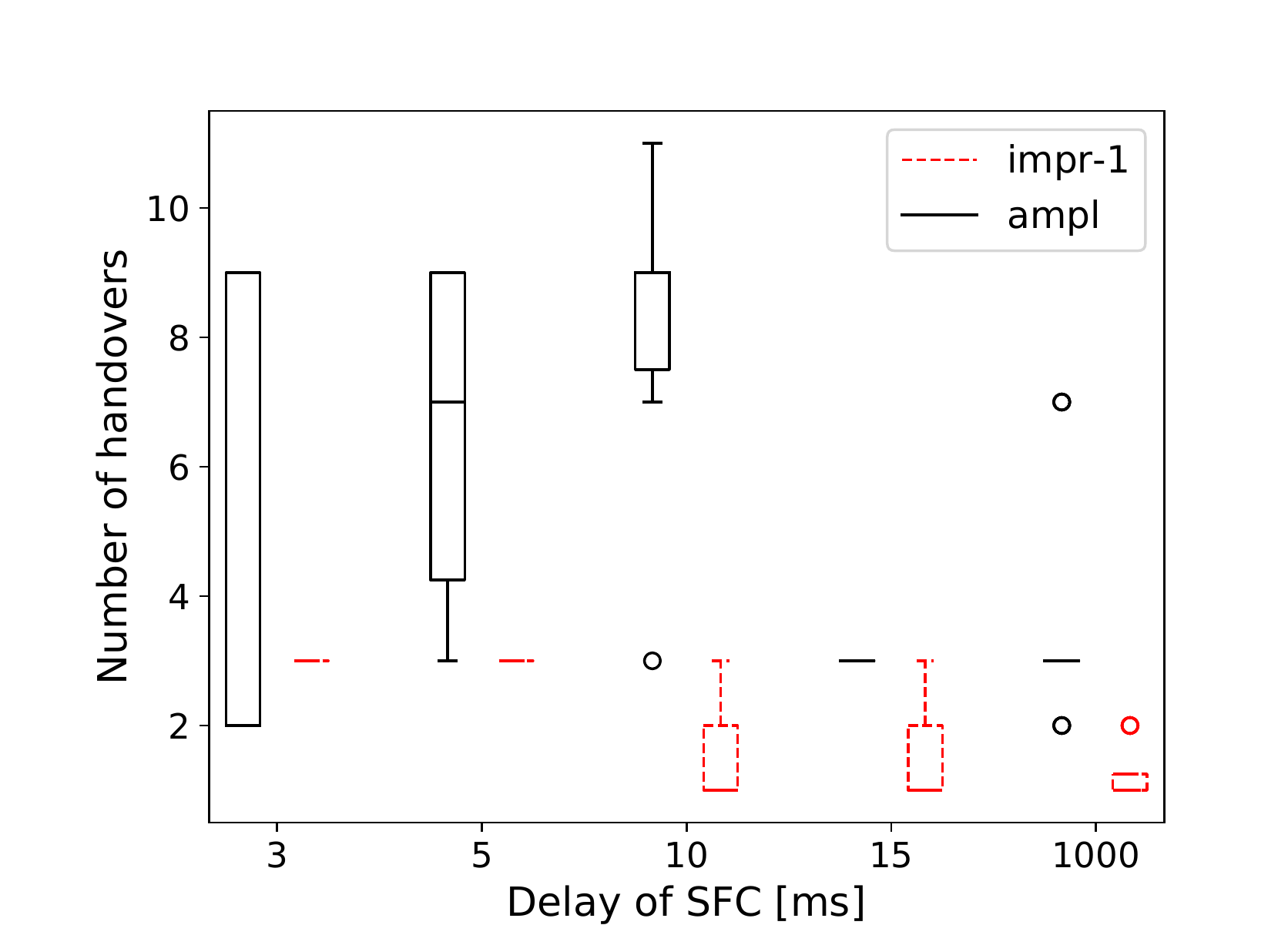}}
    \caption{Results of scalability, coverage probability and SFC delay experiments}
\end{figure*}

First of all, the scalability of the algorithms are compared depending on the number of VNFs to be placed; results in terms of cost and runtime are shown in Figure~\ref{fig:scalability:cost} and Figure~\ref{fig:scalability:time}, respectively. 
The time-bound feasibility is shown on top of the figures for each randomized scenario repetition corresponding to the dependant value on the horizontal axis.
The scalability experiment is repeated multiple times for each input size, varying the distribution of VNF capacity requirements, the service graph's concrete topology, and the selection of the VNFs bound to the mobile cluster (see Table~\ref{tab:simulation-params}).
The scenario parameters allow a solution to be found in any randomized generation, due to loose SFC delay, coverage and battery probability thresholds; though the $30mins$ time limit may not be enough in all cases.
Figure~\ref{fig:scalability:cost} shows the time-bound feasibility ratio calculated on the randomized repetitions.
A steep drop of feasibility of the AMPL implementation
occurs at the VNF count of 60, which is due to reaching the computation timeout in each case.
The reason behind the timeouts is the exponential runtime of the
AMPL solution, which is shown by Figure~\ref{fig:scalability:time} in logarithmic time scale.
On the contrary, the heuristic with improvement score limit $\Upsilon$ set to $1$, shows an exceptionally good scaling both in terms of feasibility and running time, finding solutions below $100ms$ for all tests.
Furthermore, Figure~\ref{fig:scalability:cost} depicts the quality of the heuristic solutions, which are between 15\% and 30\% of the optimal costs found by the integer program.


Second, the effect of the coverage probability threshold $\kappa_q$ is studied.
Figure~\ref{fig:coverage:cost} shows how the cost varies
by increasing the threshold, i.e. making the AP selection more strict.
As the coverage probability requirement increases, deployment costs become
more expensive, because the solutions impose the selection of the closer
and more expensive NR antennas, rather than the cheap LTE antennas.
Figure~\ref{fig:coverage:cost} depicts as well the feasibility, and shows
that for $\kappa_q=0.99$ all scenarios are infeasible, because there
exists at least one subinterval in which the cluster is not covered by
any antennas with such high probability.
Regarding the impact of the improvement score $\Upsilon$,
Figure~\ref{fig:coverage:cost} and Figure~\ref{fig:coverage:time} show 
that $\Upsilon=2$ (\emph{impr-2} time series) finds cheaper
solutions faster.
This is due to the heuristic design, which goes faster by shrinking the
solution space and considering only VNF relocations with higher improvement score.
The heuristic finds cheaper deployments faster, because they require less steps to make the rounded fractional solution feasible.


Next, the results of simulations varying the SFC delay are shown in Figure~\ref{fig:e2edelay:cost} and Figure~\ref{fig:e2edelay:handover}. 
The heuristic \textit{impr-1} struggles with finding feasible solutions in the allocated time for the $3ms$ scenarios, while AMPL manages to prove the existence of valid solutions as shown by the feasibility percentages of Figure~\ref{fig:e2edelay:cost}.
This could be easily addressed by introducing a search space pruning step in addition to the locality constraints. 
In the $3ms$ scenarios the usage of the cheap and high capacity cloud nodes is not an option because their RTTs from all APs are above this value.
Excluding these compute nodes from the allocation options for the VNFs contained in the strict SFCs would dramatically decrease the running time and thus increase the time-bound feasibility of the heuristic.
Although, additional pruning steps decrease solution quality in the cases of more permissive delay requirements.
On the other hand, the heuristic greatly outperforms the optimal solution search in the $10$-$15ms$ scenarios, where the AMPL algorithm fails to find any feasible solution.
In the loose SFC delay bound case of $1000ms$ both algorithms find solutions and their qualities are compared by their costs.
Another interesting aspect of the solutions are the number of required handovers needed for the whole optimization time interval. 
A lower handover count requires less management operations and results in a more stable service.
Handover comparison between the cost-optimal and the heuristic solutions are shown in Figure~\ref{fig:e2edelay:handover}.
The heuristic outperforms the optimal solution, which is especially relevant when the scenario could be solved by a few handovers as shown by the $10ms$ experiment scenarios with 100\% heuristic feasibility.
The AMPL algorithm could be modified to minimize the number of handovers, but it would further worsen its scalability, while the heuristic performs well by design.

\begin{figure}
    \centering
    \includegraphics[width=0.95\columnwidth]{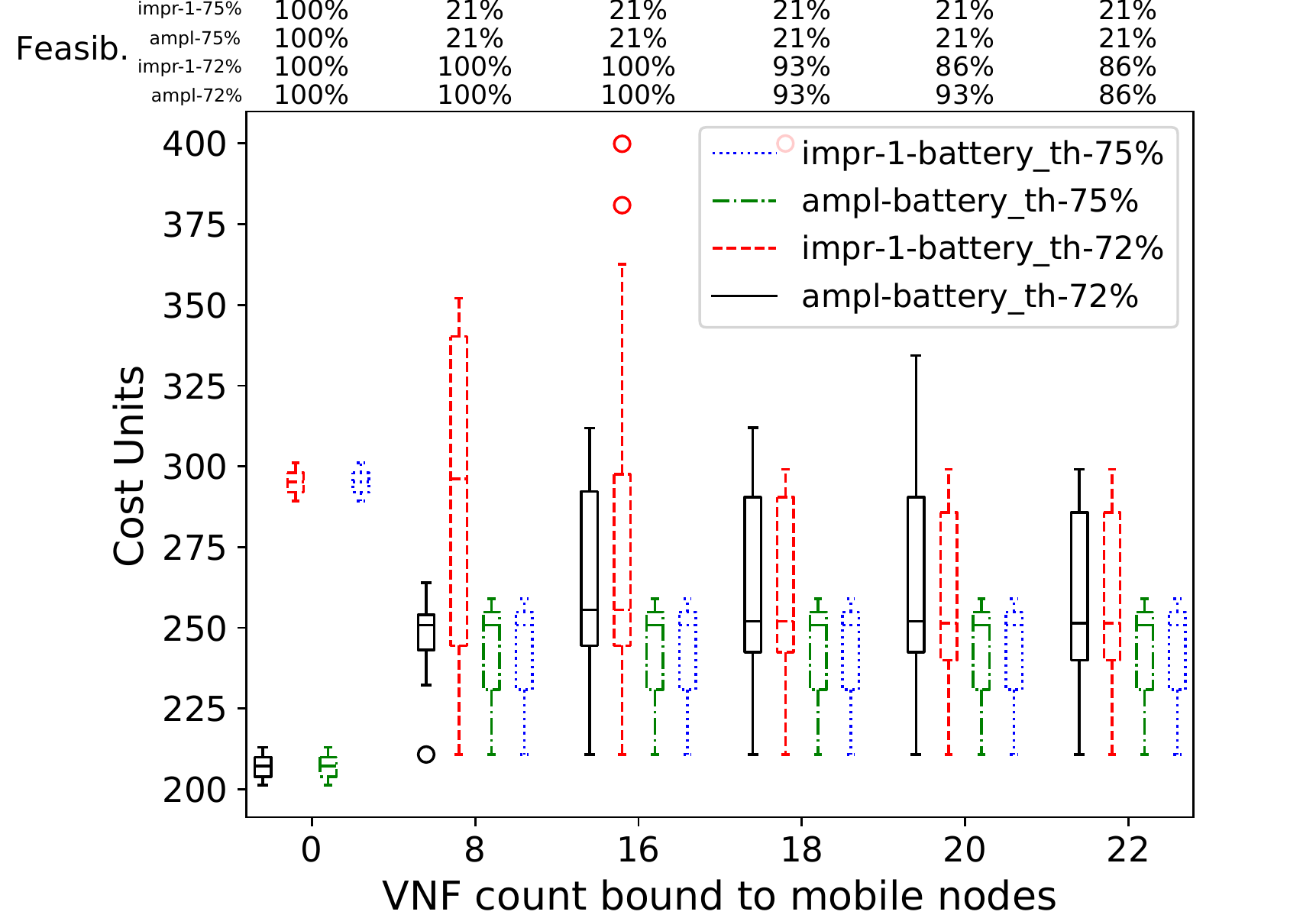}
    \caption{Impact of battery probability threshold on cost and feasibility}
    \label{fig:battery:cost}
\end{figure}

Last, the results of the conducted experiments to examine the battery threshold parameter's effects are shown in Figure~\ref{fig:battery:cost}. 
The figure depicts cost values for both algorithms in cases of $72\%$ and $75\%$ battery alive probability requirements, as the number of VNFs to be placed on the mobile cluster increases.
These scenarios challenge constraint~\eqref{eq:batteryThreshold}, discovering the critical battery threshold to be around $72\%$-$75\%$.
In the $72\%$ case the scenarios are vastly feasible with a slight decrease.
The heuristic finds close to optimal allocations in almost all scenarios, except in the extreme case of much freedom.
In the more strict case of $75\%$, besides the no location-bound VNF experiment which is essentially the same as the $72\%$ case, the heuristic always finds all optimal solutions where it exists.

The implementation of the algorithms, the simulation framework and all presented scenarios with raw data are available for further usage or result reproduction\footnote{GitHub link will be added to the camera ready version.}.

\section{Related work}
\label{sec:relatedwork}

Due to the widespread of virtualization technologies, the problem of allocating VNFs on top of physical resources has been of interest in recent years.
In most of the existing research the allocation of VNFs is envisioned as an optimization problem, that is generally $\mathcal{NP}$-hard \cite{rost2018VNEcomplexity}.

A common technique is to solve the VNF allocation problem as a variation
of the bin packing problem, taking the VNFs as items, and the bins as
servers.
Particularly, the first steps of this paper's proposed heuristic are build upon the basis defined in the algorithm of~\cite{cambazard:hal-00858159}, which minimizes a data center energy
consumption using a generalized bin packing problem.
Works as~\cite{vneResponseTime} solve the VNF allocation using the
variable size bin packing problem~\cite{variableSizeBinPacking}, which
provides an efficient solution to minimize both response time, and
resource utilization.
Other research have studied different and relevant generalizations for variable
sized bin-dependent costs \cite{Baldi2019GeneralizedBinP}.
A recent survey categorizes bin packing problem generalizations which might be relevant to VNF placement solutions~\cite{Coffman2013BinPA}.
In general, algorithms for bin packing problems do not consider delays on the sequence of items, 
nor any topological constraint among the bins, 
so using their results for VNF placement problems is not trivial; our heuristic builds on such results.

Solutions of the VNF allocation problem must reshape with the new
5G networks, which bring computational capabilities closer to the user
thanks to MEC~\cite{mecIn5g}, and fog computing~\cite{fogOrchMagazine}.
Indeed, servers are way closer to antennas, or even co-located with them in
the edge, and IoT devices are becoming part of a dense network.
Thus, 5G comes with the urge of a more dense radio coverage, and the
possibility of sharing public/private network infrastructure~\cite{whitepaperPrivate}~\cite{3gpp5gReq} can help to achieve it.
Orchestration in the edge of 5G pops up the quest of deploying NSs with
very strict latency requirements, and research
as~\cite{eurecomGenetic5gt} and~\cite{latencyStoppingTheory} study
solutions about how to allocate VNFs in the edge.
While the first one uses a genetic algorithm to obtain a fixed allocation that minimizes/maximizes latency/availability, the second provides a stopping theory solution that migrates the allocated VNFs as time passes, such that latency restrictions are not violated.

There is research that focus on VNF allocation in fog
environments.
\cite{fogLatencyModel} presents an allocation model accounting for the computational overhead of fog devices, based on the assigned workload; and
\cite{trafficGenerators}~studies how to satisfy e2e delay by reducing the distance of the deployed service, to the users consuming the service (envisioned as traffic generators).
Another approach to allocate VNFs is to deploy them jointly using cloud and fog devices, as~\cite{jointRadioResources} does.
In that work, service providers derive a wireless and resource sharing model of fog devices, and the allocation is done using a student project allocation algorithm.
There are other results~\cite{Samie2016} related to low energy IoT devices, that study the trade-off between the energy requirement for computation, and transmitting data, as a computation task outsourcing pipeline is proposed.

Although the literature already provides solutions to perform the VNF
allocation on edge and fog scenarios, this paper contributes to the state-of-the-art by tackling all at once
the (i) radio coverage; (ii) battery consumption; and (iii) e2e
delay restrictions present in 5G use cases with mobile compute nodes.


\section{Conclusions}
\label{sec:conclusion}

This paper has analyzed the notoriously hard problem of VNF placement in a realistic use case based scenario: mobile robotics for warehousing solution in the Valencia city haven, where a NPN deployment in a public network is assumed. 
In this scenario, mobile compute nodes act as an extension of the cloud and edge computing infrastructure, which triggers the need for VNF placement solutions with strict delay bounds and reliability constraints, while taking into account radio coverage, mobility and battery conditions.

The paper has introduced a system model and a mathematical formulation of the problem, to then propose an efficient heuristic building on the fractional optimal solution of a bin packing variant.
The heuristic has been extensively evaluated via simulations in terms of scalability and the strictness of constraints which are relevant to the use case. 
Results show that the proposed heuristic provides results with costs close to the optimal solution, while improving the convergence speed to the solution (therefore the number of time-feasible solutions is increased) and minimizing the number of required handovers.
To the best of our knowledge, our solution is the first to tackle the VNF placement problem simultaneously respecting battery, coverage and delay constraints over a mobile and volatile 5G infrastructure.



\section*{Acknowledgements}
This work has been partially funded by the \textit{EC H2020 5G-DIVE Project} (grant no.:~859881) and \textit{H2020-ICT-2018-3 EC H2020 5G-GROWTH Project} (grant no.:~H2020-ICT-2018-3 856709).
\bibliographystyle{IEEEtran}
\bibliography{references.bib}

\begin{thebibliography}{10}
\providecommand{\url}[1]{#1}
\csname url@samestyle\endcsname
\providecommand{\newblock}{\relax}
\providecommand{\bibinfo}[2]{#2}
\providecommand{\BIBentrySTDinterwordspacing}{\spaceskip=0pt\relax}
\providecommand{\BIBentryALTinterwordstretchfactor}{4}
\providecommand{\BIBentryALTinterwordspacing}{\spaceskip=\fontdimen2\font plus
\BIBentryALTinterwordstretchfactor\fontdimen3\font minus
  \fontdimen4\font\relax}
\providecommand{\BIBforeignlanguage}[2]{{%
\expandafter\ifx\csname l@#1\endcsname\relax
\typeout{** WARNING: IEEEtran.bst: No hyphenation pattern has been}%
\typeout{** loaded for the language `#1'. Using the pattern for}%
\typeout{** the default language instead.}%
\else
\language=\csname l@#1\endcsname
\fi
#2}}
\providecommand{\BIBdecl}{\relax}
\BIBdecl

\bibitem{3gpp-urllc}
3GPP, ``{Study on Scenarios and Requirements for Next Generation Access
  Technologies},'' 3rd Generation Partnership Project (3GPP), Technical
  Specification (TS) 38.913, 08 2017, version 14.3.0.

\bibitem{whitepaperPrivate}
5GACIA, ``{5G Non-Public Networks for Industrial Scenarios},'' 5G Alliance for
  Connected Industries and Automation, White Paper, 7 2019.

\bibitem{3gppMobileRobotics}
3GPP, ``{Study on Communication for Automation in Vertical Domains},'' 3rd
  Generation Partnership Project (3GPP), Technical Specification (TS) 22.804,
  12 2018, version 16.2.0.

\bibitem{lightweight}
B.~Nogales, V.~Sanchez-Aguero, I.~Vidal, and F.~Valera, ``Adaptable and
  automated small uav deployments via virtualization,'' \emph{Sensors},
  vol.~18, no.~12, p. 4116, 2018.

\bibitem{ituArch}
ITU-T, ``{Consideration on 5G transport network reference architecture and
  bandwidth requirements},'' International Telecommunication Union -
  Telecommunication Standardization Sector (ITU-T), Study Group 15 Contribution
  0462, 2 2018.

\bibitem{luca-5g-qos}
L.~Cominardi, L.~M. Contreras, C.~J. Bcrnardos, and I.~Berberana,
  ``{Understanding QoS Applicability in 5G Transport Networks},'' in \emph{2018
  IEEE International Symposium on Broadband Multimedia Systems and Broadcasting
  (BMSB)}, June 2018, pp. 1--5.

\bibitem{cambazard:hal-00858159}
\BIBentryALTinterwordspacing
H.~Cambazard, D.~Mehta, B.~O'Sullivan, and H.~Simonis, ``{Bin Packing with
  Linear Usage Costs - An Application to Energy Management in Data Centres},''
  \emph{{Principles and Practice of Constraint Programming - 19th International
  Conference}}, p.~?, 2013, best Application Paper Award. [Online]. Available:
  \url{https://hal.archives-ouvertes.fr/hal-00858159}
\BIBentrySTDinterwordspacing

\bibitem{seriesParallelvnfs}
M.~Suter, R.~Eidenbenz, Y.-A. Pignolet, and A.~Singla, ``Fog application
  allocation for automation systems,'' in \emph{2019 IEEE International
  Conference on Fog Computing (ICFC)}.\hskip 1em plus 0.5em minus 0.4em\relax
  IEEE, 2019, pp. 97--106.

\bibitem{5gen}
C.~J. B. A.~M. Jorge Mart\'in-P\'erez, Luca~Cominardi, ``5gen: A tool to
  generate 5g infrastructure graphs,'' in \emph{2019 IEEE Conference on
  Standards for Communications and Networking (CSCN)}, 2019.

\bibitem{ampl}
R.~Fourer, D.~M. Gay, and B.~W. Kernighan, ``Ampl. a modeling language for
  mathematical programming,'' 1993.

\bibitem{gurobi}
I.~Gurobi~Optimization, ``Gurobi optimizer reference manual,'' \emph{URL
  http://www. gurobi. com}, 2015.

\bibitem{3gppLTE8}
3GPP, ``{Overview of 3GPP Release 8},'' 3rd Generation Partnership Project
  (3GPP), Technical Specification (TS), 09 2016.

\bibitem{ltertt}
R.~Stuhlfauth, ``{UMTS Long Term Evolution(LTE)Reiner Stuhlfauth},'' 11 2012.

\bibitem{ltecost}
V.~Nikolikj and T.~Janevski, ``A cost modeling of high-capacity lte-advanced
  and ieee 802.11 ac based heterogeneous networks, deployed in the 700 mhz, 2.6
  ghz and 5 ghz bands,'' \emph{Procedia Computer Science}, vol.~40, pp. 49--56,
  2014.

\bibitem{nrDelays}
N.~{Patriciello}, S.~{Lagen}, L.~{Giupponi}, and B.~{Bojovic}, ``5g new radio
  numerologies and their impact on the end-to-end latency,'' in \emph{2018 IEEE
  23rd International Workshop on Computer Aided Modeling and Design of
  Communication Links and Networks (CAMAD)}, Sep. 2018, pp. 1--6.

\bibitem{nrCoverage}
B.~{Halvarsson}, A.~{Simonsson}, A.~{Elgcrona}, R.~{Chana}, P.~{Machado}, and
  H.~{Asplund}, ``5g nr testbed 3.5 ghz coverage results,'' in \emph{2018 IEEE
  87th Vehicular Technology Conference (VTC Spring)}, June 2018, pp. 1--5.

\bibitem{5gteconomical}
5G-TRANSFORMER, ``{5G-TRANSFORMER final system design and Techno-Economic
  Analysis},'' Tech. Rep. deliverable D1.4, 2019.

\bibitem{rost2018VNEcomplexity}
M.~{Rost} and S.~{Schmid}, ``Charting the complexity landscape of virtual
  network embeddings,'' in \emph{2018 IFIP Networking Conference (IFIP
  Networking) and Workshops}, May 2018, pp. 1--9.

\bibitem{vneResponseTime}
Q.~Zhang, Y.~Xiao, F.~Liu, J.~C. Lui, J.~Guo, and T.~Wang, ``Joint optimization
  of chain placement and request scheduling for network function
  virtualization,'' in \emph{2017 IEEE 37th International Conference on
  Distributed Computing Systems (ICDCS)}.\hskip 1em plus 0.5em minus
  0.4em\relax IEEE, 2017, pp. 731--741.

\bibitem{variableSizeBinPacking}
D.~K. Friesen and M.~A. Langston, ``Variable sized bin packing,'' \emph{SIAM
  journal on computing}, vol.~15, no.~1, pp. 222--230, 1986.

\bibitem{Baldi2019GeneralizedBinP}
\BIBentryALTinterwordspacing
M.~M. Baldi, D.~Manerba, G.~Perboli, and R.~Tadei, ``{A Generalized Bin Packing
  Problem for parcel delivery in last-mile logistics},'' \emph{European Journal
  of Operational Research}, vol. 274, no.~3, pp. 990--999, 2019. [Online].
  Available:
  \url{https://ideas.repec.org/a/eee/ejores/v274y2019i3p990-999.html}
\BIBentrySTDinterwordspacing

\bibitem{Coffman2013BinPA}
E.~G. Coffman, J.~Csirik, G.~Galambos, S.~Martello, and D.~Vigo, ``Bin packing
  approximation algorithms: Survey and classification,'' 2013.

\bibitem{mecIn5g}
Y.~F. P. K. e.~a. Sami~Kekki, Walter~Featherstone, ``{MEC in 5G networks},''
  European Telecommunications Standards Institute (ETSI), White paper~28, 6
  2018.

\bibitem{fogOrchMagazine}
Z.~Wen, R.~Yang, P.~Garraghan, T.~Lin, J.~Xu, and M.~Rovatsos, ``Fog
  orchestration for internet of things services,'' \emph{IEEE Internet
  Computing}, vol.~21, no.~2, pp. 16--24, 2017.

\bibitem{3gpp5gReq}
3GPP, ``{Service requirements for the 5G system; Stage 1},'' 3rd Generation
  Partnership Project (3GPP), Technical Specification (TS), 06 2018, version
  15.5.0.

\bibitem{eurecomGenetic5gt}
L.~Yala, P.~A. Frangoudis, and A.~Ksentini, ``Latency and availability driven
  vnf placement in a mec-nfv environment,'' in \emph{2018 IEEE Global
  Communications Conference (GLOBECOM)}.\hskip 1em plus 0.5em minus 0.4em\relax
  IEEE, 2018, pp. 1--7.

\bibitem{latencyStoppingTheory}
R.~Cziva, C.~Anagnostopoulos, and D.~P. Pezaros, ``Dynamic, latency-optimal vnf
  placement at the network edge,'' in \emph{Ieee infocom 2018-ieee conference
  on computer communications}.\hskip 1em plus 0.5em minus 0.4em\relax IEEE,
  2018, pp. 693--701.

\bibitem{fogLatencyModel}
K.~Intharawijitr, K.~Iida, and H.~Koga, ``Analysis of fog model considering
  computing and communication latency in 5g cellular networks,'' in \emph{2016
  IEEE International Conference on Pervasive Computing and Communication
  Workshops (PerCom Workshops)}.\hskip 1em plus 0.5em minus 0.4em\relax IEEE,
  2016, pp. 1--4.

\bibitem{trafficGenerators}
A.~Santoyo-Gonz{\'a}lez and C.~Cervell{\'o}-Pastor, ``Latency-aware cost
  optimization of the service infrastructure placement in 5g networks,''
  \emph{Journal of Network and Computer Applications}, vol. 114, pp. 29--37,
  2018.

\bibitem{jointRadioResources}
Y.~Gu, Z.~Chang, M.~Pan, L.~Song, and Z.~Han, ``Joint radio and computational
  resource allocation in iot fog computing,'' \emph{IEEE Transactions on
  Vehicular Technology}, vol.~67, no.~8, pp. 7475--7484, 2018.

\bibitem{Samie2016}
F.~{Samie}, V.~{Tsoutsouras}, L.~{Bauer}, S.~{Xydis}, D.~{Soudris}, and
  J.~{Henkel}, ``Computation offloading and resource allocation for low-power
  iot edge devices,'' in \emph{2016 IEEE 3rd World Forum on Internet of Things
  (WF-IoT)}, Dec 2016, pp. 7--12.

\end{thebibliography}

\newcommand{\mystring}{Lorem ipsum dolor sit amet, consectetuer adipiscing elit. Ut purus elit,
vestibulum ut, placerat ac, adipiscing vitae, felis. Curabitur dictum gravida
mauris. Nam arcu libero, nonummy eget, consectetuer id, vulputate a, magna.
Donec vehicula augue eu neque. Pellentesque habitant morbi tristique senectus
et netus et malesuada fames ac turpis egestas. Mauris ut leo. Cras viverra
metus rhoncus sem. Nulla et lectus vestibulum urna fringilla ultrices. Phasellus
eu tellus sit amet tortor gravida placerat. Integer sapien est, iaculis in, pretium
quis, viverra ac, nunc. Praesent eget sem vel leo ultrices bibendum. Aenean
faucibus. Morbi dolor nulla, malesuada eu, pulvinar at, mollis ac, nulla. 
Curabitur auctor semper nulla. Donec varius orci eget risus. Duis nibh mi, congue
eu, accumsan eleifend, sagittis quis, diam. Duis eget orci sit amet orci dignissim
rutrum.Lorem ipsum dolor sit amet, consectetuer adipiscing elit. Ut purus elit,
vestibulum ut, placerat ac, adipiscing vitae, felis. Curabitur dictum gravida
mauris. Nam arcu libero, nonummy eget, consectetuer id, vulputate a, magna.
Donec vehicula augue eu neque. Pellentesque habitant morbi tristique senectus
et netus et malesuada fames ac turpis egestas. Mauris ut leo. Cras viverra
metus rhoncus sem. Nulla et lectus vestibulum urna fringilla ultrices. Phasellus
eu tellus sit amet tortor gravida placerat. Integer sapien est, iaculis in, pretium
quis, viverra ac, nunc. Praesent eget sem vel leo ultrices bibendum. Aenean
faucibus. Morbi dolor nulla, malesuada eu, pulvinar at, mollis ac, nulla. 
Curabitur auctor semper nulla. Donec varius orci eget risus. Duis nibh mi, congue
eu, accumsan eleifend, sagittis quis, diam. Duis eget orci sit amet orci dignissim
rutrum.}

\newcommand{\biolimit}[0]{150}

\begin{IEEEbiography}[{\includegraphics[width=1in,height=1.25in,clip,keepaspectratio]{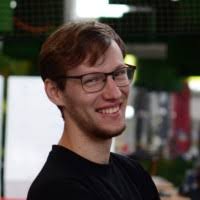}}]{Bal\'azs N\'emeth}
is an Industrial Ph.D. student at Budapest
  University of Technology and Economics (BME) in cooperation with
  Ericsson Research. He obtained his M.Sc. degree at BME as a Computer
  Science Engineer in info-communication specialization (2016). He has
  been working on orchestration algorithms for the H2020 5G-PPP 5G
  Exchange (5GEx) project. Currently, he is pursuing his Ph.D. degree in
  network softwarization with special focus on orchestration
  algorithms and next generation network models.
\end{IEEEbiography}

\begin{IEEEbiography}[{\includegraphics[width=1in,height=1.25in,clip,keepaspectratio]{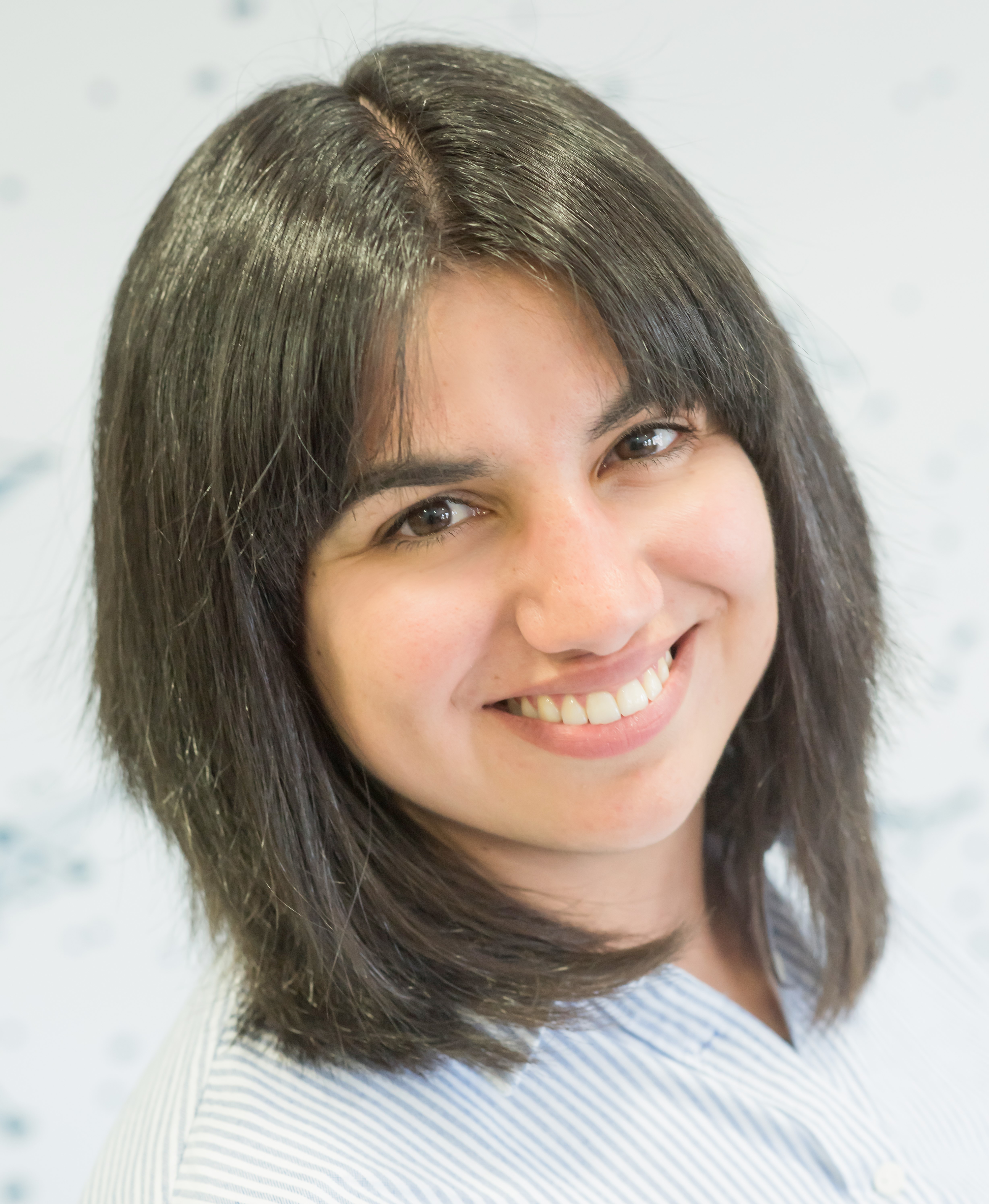}}]{Nuria Molner}
obtained her BSc in Mathematics from University of Valencia (Spain) in 2015 and her MSc in Telematics Engineering in University Carlos III of Madrid (Spain) in 2016. Currently, she is studying her PhD in Telematics Engineering from University Carlos III of Madrid, in the topic of optimization of 5G networks, while developing her research work at IMDEA Networks Institute and participating in European H2020 5G projects.
\end{IEEEbiography}

\begin{IEEEbiography}[{\includegraphics[width=1in,height=1.25in,clip,keepaspectratio]{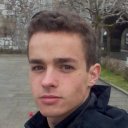}}]{Jorge Mart\'in P\'erez}
is a telematics engineering Ph.D. student at Universidad Carlos
III de Madrid. He obtained a BSc in mathematics in 2016, and
a BSc in computer science in 2016, both at Universidad
Autónoma de Madird. Latter on 2017 he obtained a M.Sc. in 
telematics engineering in Universidad Carlos III de Madrid.
His research focuses in optimal resource allocation in 5G
networks, and since 2016 he participates in
EU funded research projects.
\end{IEEEbiography}

\vspace{5em}

\begin{IEEEbiography}[{\includegraphics[width=1in,height=1.25in,clip,keepaspectratio]{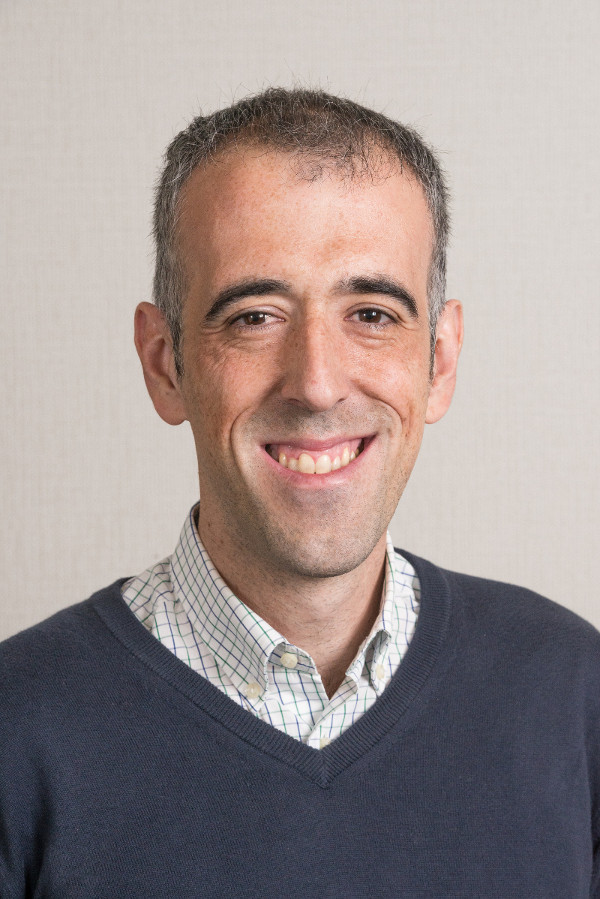}}]{Carlos J. Bernardos}
received a Telecommunication Engineering degree in 2003, and a PhD in Telematics in 2006, both from the University Carlos III of Madrid, where he worked as a research and teaching assistant from 2003 to 2008 and, since then, has worked as an Associate Professor. His research interests include IP mobility management, network virtualization, cloud computing, vehicular communications and experimental evaluation of mobile wireless networks. He has published over 70 scientific papers in international journals and conferences. He has participated in several EU funded projects, being the project coordinator of 5G-TRANSFORMER and 5Growth.
\end{IEEEbiography}

\begin{IEEEbiography}[{\includegraphics[width=1in,height=1.25in,clip,keepaspectratio]{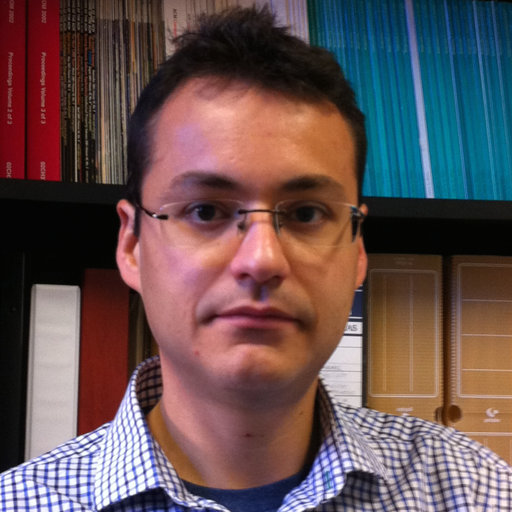}}]{Antonio de la Oliva}
received  his telecommunications  engineering  degree  in  2004 and his Ph.D. in 2008 from the Universidad Carlos III Madrid (UC3M), Spain, where he has been an associate professor since then. He is an active contributor to IEEE 802 wherehe has served as Vice-Chair of IEEE 802.21b and Technical  Editor  of  IEEE  802.21d.  He  has  also served as a Editor of several journals and magazines. He has published more than 50 papers on different networking areas. He has participated in several Eu funded project and he is currently coordinating the EC/TW joint project 5G-DIVE.
\end{IEEEbiography}

\vspace{-5em}
\begin{IEEEbiography}
  [{\includegraphics[width=1in,height=1.25in,clip,keepaspectratio]{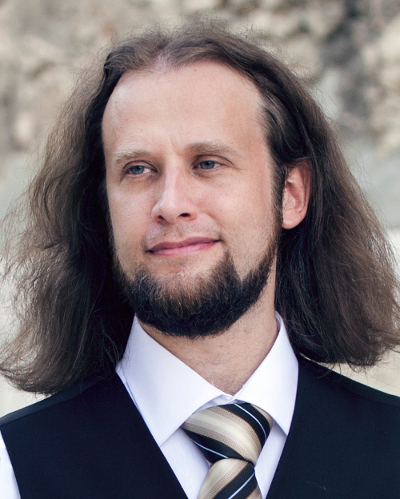}}]
  {Bal\'{a}zs Sonkoly} is an associate professor at Budapest
  University of Technology and Economics (BME) and he is the head of
  MTA-BME Network Softwarization Research Group.  He received his
  Ph.D.  (2010) and M.Sc. (2002) degrees in Computer Science from
  BME. He has participated in several EU projects (FP7 OpenLab, FP7
  UNIFY, H2020 5G Exchange) and national projects.  He was the demo
  co-chair of ACM SIGCOMM 2018, EWSDN'15,'14, IEEE HPSR'15. His
  current research activity focuses on cloud / edge / fog computing,
  NFV, SDN, and 5G.
\end{IEEEbiography}

\end{document}